\documentclass{amsart}

\usepackage{amssymb,latexsym,amsfonts,amsmath}
\usepackage{graphicx}
\graphicspath{{Figures/}}
\usepackage{booktabs}
\usepackage{xcolor}
\usepackage{epsfig}
\usepackage{subcaption}
\usepackage{dsfont}
\usepackage{epstopdf}

\topmargin  = 0.0 in
\leftmargin = 0.9 in
\rightmargin = 1.0 in
\evensidemargin = -0.10 in
\oddsidemargin =  0.10 in
\textheight = 8.5 in
\textwidth  = 6.6 in
\setlength{\parskip}{2mm}
\setlength{\parindent}{0mm}

\usepackage{cite}
\usepackage{amsmath,amssymb,amsfonts}
\usepackage{algorithmic}
\usepackage{graphicx}
\usepackage{algorithm,algorithmic}
\usepackage{hyperref}
\usepackage{textcomp}


\usepackage{amsmath}
\usepackage{amsfonts}
\usepackage{amssymb}
\usepackage{color}

\usepackage{url}
\usepackage{breakurl}
\usepackage{hyperref}

\newcommand{\ball}{\Omega}
\def\reals{\mathbb{R}}

\usepackage{graphicx}

\newtheorem{theorem}{Theorem}[section]
\newtheorem{lemma}[theorem]{Lemma}
\newtheorem{proposition}[theorem]{Proposition}

\newtheorem{definition}{Definition}[section]
\newtheorem{example}{Example}[section]
\newtheorem{remark}{Remark}

\newcommand{\alphabeth}{\Sigma_{a}}

\newcommand{\AP}{\mathsf{AP}} 
\newcommand{\conv}{\textrm{conv}}


\newcommand{\until}{\mathbin{\sf U}}

\newcommand{\nex}{\mathord{\bigcirc}}

\newcommand{\word}{\boldsymbol{\omega}}
\newcommand{\wordt}[1]{\boldsymbol{\omega}_{#1}}

\newcommand{\Lab}{\textsf{L}}   


\begin{document}

\begin{abstract}
We consider the notion of resilience for cyber-physical systems, that is, the ability of the system to withstand adverse events while maintaining acceptable functionality. We use finite temporal logic to express the requirements on the acceptable functionality and define the resilience metric as the maximum disturbance under which the system satisfies the temporal requirements. We fix a parameterized template for the set of disturbances and form a robust optimization problem under the system dynamics and the temporal specifications to find the maximum value of the parameter. Additionally, we introduce two novel classes of specifications: closed and convex finite temporal logics specifications, offering a comprehensive analysis of the resilience metric within these specific frameworks. From a computational standpoint, we present an exact solution for linear systems and exact-time reachability and finite-horizon safety, complemented by an approximate solution for finite-horizon reachability. Extending our findings to nonlinear systems, we leverage linear approximations and SMT-based approaches to offer viable computational methodologies. The theoretical results are demonstrated on the temperature regulation of buildings, adaptive cruise control and DC motors.          
\end{abstract}

\title[Temporal Logic Resilience for Dynamical Systems]{Temporal Logic Resilience for Dynamical Systems$^\star$}
\author[A. Saoud]{Adnane Saoud$^{1}$}
\author[P. Jagtap]{Pushpak Jagtap$^{2}$}
\author[S. Soudjani]{Sadegh Soudjani$^{3}$}
\address{$^1$College of Computing, University Mohammed VI Polytechnic, Benguerir, Morocco.}
\email{adnane.saoud@um6p.ma}
\address{$^2$Robert Bosch Center for Cyber-Physical Systems, Indian Institute of Science, Bangalore, India.}
\email{pushpak@iisc.ac.in}
\address{$^3$Max Planck Institute for Software Systems, Kaiserslautern, D-67663, Germany.}
\email{sadegh@mpi-sws.org}
\thanks{$^\star$ This work was supported in part by by the ARTPARK, the Google Research Grant, the SERB Start-up Research Grant, and the CSR Grants by Siemens and Nokia. This project is also supported by the following grants: EPSRC EP/V043676/1, EIC 101070802, and ERC 101089047.}
\maketitle

\section{Introduction}
Resilience has been studied by many research communities and it is broadly defined as \emph{the ability of a system to withstand adverse events while maintaining an acceptable functionality.}
For critical infrastructures, resilience is the main factor determining their reliability and is improved by continuously enhancing the prevention and absorption of disruptive events, and the recovery and adaptation for such events \cite{rehak2019complex}.
For IT systems, resilience is considered mainly against 
adverse cyber events, which are the cyber attacks that negatively impact the availability, integrity, or confidentiality of the system \cite{pillitteri2019developing}.
With climate change increasing the extreme flood events, resilience metrics that consider the dynamical changes of the system have also received attention in the water research community to define and assess the resilience of water resource recovery facilities \cite{holloway2022exploring}.


In this paper, we provide a notion of resilience for cyber-physical systems that integrates the time-evolution of the system with temporal logic to provide a quantitative measure on how the system can cope with disturbances. The temporal logic is used to formally encode the safety and other compliance requirements on the operation of the system and also express the expected behavior of the system to disturbances. We define resilience as the largest disturbance within a given (parameterized) set that can be applied to the system in its time evolution while still satisfying the temporal logic specification. This can also be interpreted as the minimum disturbance that needs to be applied to the system to falsify the specification.

We focus on a discrete-time dynamical model of the system and express resilience requirements as linear temporal logic specifications over finite traces (LTL$_F$) \cite{ijcai2017p189}. We provide different characterizations of the resilience metrics for general LTL$_F$ specifications. Moreover, we introduce two novel classes of LTL$_F$ specifications: closed and convex finite LTL$_F$ specifications. We then characterize the fragments of LTL$_F$ specifications that are closed and convex, and provide an analysis of the structural properties of resilience with respect to closed and convex specifications. From a computation perspective, we show how the optimization for computing resilience can be solved exactly for linear systems with respect to specifications, including the exact-time reachability and finite-horizon safety, and approximately with respect to finite-horizon reachability. Moreover, we provide under-approximations of the resilience metric for nonlinear systems using linear programs and SMT-based approaches. We then provide numerical examples showing the merits of the proposed results.
%

Our definition of resilience is substantially different from \emph{robustness} for temporal specifications, which is defined as follows. Robustness of a system $\Sigma$ with respect to a temporal specification $\phi$ is the largest value $\varepsilon$ such that we still satisfy $\phi$
if we expand the solutions of $\Sigma$ with a uniform $\varepsilon$ over time and over trajectories
\cite{donze2010robust,fainekos2009robustness}.
This definition does not take into account the dynamics of the system $\Sigma$ and directly applies the expansion to the solution of $\Sigma$. In reality, disturbances and extreme events affect the time evolution of the system, and this needs to be integrated with any definition of resilience for dynamical systems.

\smallskip
\noindent\textbf{Related work.}
The literature on defining quantitative semantics for different classes of temporal logic is relatively mature.
These quantitative semantics study how well the system trajectories satisfy a given specification. The techniques include using discounting modalities that give less importance to distant events \cite{almagor2014discounting} and averaging modalities \cite{LTLaverage} where the semantics of standard modalities are extended using min, max, and a long-run average operator. The paper \cite{donze2010robust} considers real-valued signals and presents variants of robustness measures that indicate how far a given signal stands, in space and time, from satisfying or violating a property and studies their sensitivity to the parameters of the system.
The paper \cite{fainekos2009robustness} considers the robust interpretation of Metric Temporal Logic to connect robust satisfaction of properties on discrete-time signals to their continuous-time counterparts.

The authors in \cite{donze2013efficient} present an efficient algorithm for computing the robustness degree in which a piecewise-continuous signal satisfies or violates a Signal Temporal Logic (STL) formula. Application of robustness metrics in specification-based monitoring of cyber-physical systems (CPS) is provided in \cite{bartocci2018specification} with a survey of theory and tools. The robustness metric is also used for temporal logic falsification of CPS \cite{aerts2018temporal}.
STL is also used in \cite{chen2022stl,chen2023stl} to study two important resilience properties of CPS, which are recoverability and durability. Having tolerance thresholds in both time and value is considered in \cite{gazda2020logical} to characterize approximate notions of hybrid conformance. The notion of robustness with respect to a temporal specification is also defined and used in stochastic systems. The paper \cite{ilyes2022stochastic} defines robustness for continuous-time linear stochastic systems with respect to STL specifications. This notion computes the probability of satisfying atomic prepositions in the STL formula and combines the related probabilities using logical connectives based on the structure of the STL formula to get an interval for robustness. The robustness of STL specifications is used in \cite{farahani2018shrinking} to perform controller synthesis for stochastic systems using a model predictive control approach. While we do not consider stochastic disturbances in this work, any available statistics on the likelihood of disturbances can be mapped to statistical properties of the resilience metric, e.g., by constructing a distribution for the resilience value when the distribution of the possible disturbances is known.

All the works mentioned above study the robust satisfaction of properties for a given set of disturbances. In contrast, we are looking at characterizing resilience to compute the largest disturbance within a given parameterized set. The work closest in spirit to our approach is the paper \cite{schulze2017scaling} that is limited to linear systems and safety specifications and studies properties of polytopic invariant sets when the size of the disturbance set changes with a scaling factor. The current paper also extends the preliminary results presented in \cite{resilience}. In the current version, we added numerical examples and a new case study to illustrate the proposed concepts, we provided a new approach to compute the resilience metric for finite-horizon reachability using scenario optimization \cite{esfahani2014performance} and a new approach to compute the resilience metric for nonlinear systems using an SMT-based approach \cite{barrett2018satisfiability}. Finally, we presented a characterization of the fragments of LTL$_F$ specifications that are closed and convex, and we included all the proofs.

\smallskip
In summary, the main contributions of this paper are as follows.
We provide a quantitative resilience metric by integrating the underlying dynamics with temporal logic specifications.
We show how the related optimizations can be solved for linear systems and various types of specifications.
We show that resilience with respect to convex or closed specifications enjoys some \emph{nice} properties.
Finally, we provide under-approximations of the
resilience metric for nonlinear systems. 

The rest of this paper is organized as follows. We provide the preliminaries on the model class and temporal logic specifications in Sec.~\ref{sec:prel}.
We define the notion of resilience and study its structural properties in Sec.~\ref{sec:characterization}.
We provide computations for linear systems in Sec.~\ref{sec:linear} and discuss the computation of lower bounds for nonlinear systems in Sec.~\ref{sec:nonlinear}. The results are demonstrated on examples from adaptive cruise control, temperature regulation of buildings, and DC motors in Sec.~\ref{sec:case_studies} with concluding remarks in Sec.~\ref{sec:concl}.

\section{Preliminaries}
\label{sec:prel}
\textbf{Notation:} 
The symbols $\reals$, $\reals_{\geq 0}$, $\mathbb{N}$, and $\mathbb{N}_{\geq n}$ denote the set of real, nonnegative real numbers, nonnegative
integers, and integers greater than or equal to $n\in\mathbb{N}$, respectively.
We use $\reals^{n\times m}$ to denote the space of real matrices
with $n$ rows and $m$ columns.
For a matrix $A\in \reals^{n\times m}$, $A^T$ represent the transpose of $A$. For a vector $x\in\reals^n$, we use $\|x\|$ and $\|x\|_\infty$ to denote the Euclidean and infinity norm, respectively. 
We use $\mathbb{I}$ to denote the identity matrix. 
For a set of $p$ points $C = \{c_1,c_2,\ldots,c_p\}$, $c_i\in \reals^n$, the convex hull of $C$ is represented by $\conv(c_1,c_2,\ldots,c_p):=\{\alpha_1c_1+\alpha_2c_2+\ldots+\alpha_pc_p\mid c_i\in C, \alpha_i\geq 0, i\in\{1,2,\ldots,p\},\sum_{i=1}^{p}\alpha_i=1\}$. An interval in $\mathbb{R}^n$ is a set denoted by $X=[\underline{x}_1,\overline{x}_1]\times[\underline{x}_2,\overline{x}_2]\times \ldots \times [\underline{x}_n,\overline{x}_n]$ and defined as $X=\left\{x \in \mathbb{R}^n \mid \underline{x}_i \leq x_i \leq \overline{x}_i,~i\in \{1,2,\ldots,n\}\right\}$, where $x_i \in \mathbb{R}$ represents the ith component of the vector $x \in \mathbb{R}^n$. In particular, when $\underline{x}_i=\underline{x}$ and $\overline{x}_i=\overline{x}$ for all $i \in \{1,2,\ldots,n\}$, then the interval $X$ can be written in a compact form as: $X=[\underline{x},\overline{x}]^n$. Given $x \in \mathbb{R}^n$ and $\varepsilon \geq 0$, $\ball_{\varepsilon}(x)=\{z \in \mathbb{R}^n \mid \|z-x\|_\infty \leq \varepsilon$\}.

\subsection{Discrete-Time Dynamical Systems}
A discrete-time system is a tuple $\Sigma = (X,D,f)$, where
$X\subset \reals^n$ is the state space and
$D\subset\reals^n$ is the disturbance space which is assumed to be a compact set containing the origin. The evolution of the state of $\Sigma$ is given by
\begin{equation}
\label{eqn:sys}
x(k+1)=f(x(k))+d(k),~~ k\in \mathbb{N},
\end{equation}
where $d(k)\in D$ represents the additive disturbance. The trajectory of system $\Sigma$ of length $N$ is represented by $w_x=(x_0,x_1,\ldots,x_{N-1}) \in X^{N}$, where $x_k$ represents the value of trajectory starting from a state $x(0)=x_0\in X$ at $k^{\text{th}}$ instance (i.e., $x(k)$).

\subsection{LTL$_F$ Specifications}

Linear temporal logic (LTL) provides a high-level language for describing the desired behavior of a dynamical system. Formulas in this logic are constructed inductively using a set of atomic propositions and combining them via Boolean operators \cite{baier2008principles}.
Consider a finite set of atomic propositions $\AP$ that defines the alphabet $\alphabeth := 2^{\AP}$. Each letter of this alphabet evaluates a subset of the atomic propositions as true. In this work, we consider LTL specifications over finite words, referred to as LTL$_F$, where the letters form finite words defined as
$\word=(\wordt{0},\wordt{1},\wordt{2},\ldots,\wordt{N-1})\in\alphabeth^{N}$ for some $N\in\mathbb{N}$, with $\wordt{i}$ representing the letter of the word at $i$th instance.
These words are connected to trajectories of the system $\Sigma$ via a measurable labeling function $\Lab:X\rightarrow \alphabeth$ that assigns letters $\alpha =\Lab(x)$ to state $x\in X$. That is, any finite trajectory $w_x = (x_0,x_1,\ldots,x_{N-1})$ is mapped to the set of finite words $\alphabeth^{N}$, as
$\word=\Lab(w_x) := (\Lab(x_0),\Lab(x_1),\Lab(x_2),\ldots\Lab(x_{N-1}))$.
\medskip
\begin{definition}
	\label{def:LTL}
	An LTL$_F$ formula over a set of atomic propositions $\AP$ is constructed inductively as
	\begin{equation*}
	\label{eq:PNF}
	\psi ::=  \textsf{true} \,|\, p \,|\, \neg \psi  \,|\,\psi_1 \wedge \psi_2 \,|\, \psi_1 \vee \psi_2 \,|\, \nex \psi \,|\, \psi_1\until \psi_2 \,|\, \square\psi\,|\,\lozenge\psi,
	\end{equation*}
	with $p\in \AP$, and $\psi_1,\psi_2,\psi$ being LTL$_F$ formulas.
	\end{definition}

Given a finite word $\word$ of length $N$ and an LTL$_F$ formula $\psi$, we inductively define when an LTL$_F$ formula is true at the $i^{\text{th}}$ step $(i < N)$ and denoted by $\word_{i}\models\psi$, as follows:

\begin{itemize}
\item $\word_i\vDash\textsf{true}$ always hold and $\word_i\vDash\textsf{false}$ does not hold.
\item An atomic proposition, $ \word_i\vDash   p$  for $ p\in \AP$ holds if $p \in\word_{i}$.
\item A negation, $\word_i\vDash\neg p$, holds if $ \word_i\nvDash p$.
\item A logical conjunction, $\word_i\vDash \psi_1\wedge\psi_2$, holds
if $ \word_i\vDash \psi_1$ and $ \word_i\vDash \psi_2$.
\item A logical disjunction, $\word_i\vDash \psi_1\vee\psi_2$, holds
if $ \word_i\vDash \psi_1$ or $ \word_i\vDash \psi_2$.
\item A temporal next operator, $\word_i\vDash\nex\psi$, holds if $\word_{i+1}\vDash \psi$. Similarly, for $0 \leq j < N-i$, $\word_i\vDash\nex^j\psi$, holds if $\word_{i+j}\vDash \psi$. 
\item A temporal until operator, $\word_i\vDash \psi_1\until\psi_2$, holds if for some $m$ such that $i\leq m< N$, we have $\word_m\vDash\psi_2$ and for all $i\leq k< m$, we have $\word_k\vDash\psi_1$.
\item A temporal always operator, $\word_i\vDash \square\psi$, holds if for all $m$ such that $i\leq m< N$, we have $\word_m\vDash \psi$. Similarly, for $0 \leq j < N-i$, $\word_i\vDash \square^j\psi$ holds if for all $i \leq m \leq i+j$, we have $\word_m\vDash \psi$.
\item A temporal eventually operator, $\word_i\vDash \lozenge\psi$, holds if for some $m$ such that $i\leq m< N$, we have $\word_m\vDash \psi$. Similarly, for $0 \leq j < N-i$, $\word_i\vDash \lozenge^j\psi$, holds if for some $m$ such that $i\leq m< i+j$, we have $\word_{m}\vDash \psi$. 
\end{itemize}


An LTL$_F$ formula $\psi$ is true for $\word$, denoted by $\word\vDash\psi$, if and only if $\wordt{0}\vDash\psi$. For a trajectory $w_x=(x_0,x_1,\ldots,x_{N-1}) \in X^{N}$, we say that $w_x \vDash \psi $ if for $\word=\Lab(w_x) := (\Lab(x_0),\Lab(x_1),\ldots,\Lab(x_{N-1}))$, we have $\word \vDash \psi$. Similarly, for a set of trajectories $\mathcal{X} \subseteq X^{N}$, we say that $\mathcal{X} \vDash \psi$, if $w_x \vDash \psi$ for all $w_x \in \mathcal{X}$.


\medskip
\begin{remark}
Our notion of resilience is general, but for computational purposes, we restrict ourselves to the following specifications over words of length $N$: $\square p$, $\nex p$, $\lozenge p$, with $p\in \AP$, and conjunctions over them. Note that all LTL$_F$ can be represented using Deterministic Finite Automata (DFA) \cite{zhu2019first} and effectively, one can represent them using sequences of reach and avoid specifications (i.e., $\psi=\lozenge p \wedge \square \neg q$, where $p,q\in\alphabeth$) \cite[Section III.b]{wang2022verified}. Thus, one can easily use the results provided in the paper for any LTL$_F$ specification using the properties provided in Proposition \ref{prop:struct}.
\end{remark}

\section{Characterizations of Resilience}
\label{sec:characterization}

The goal of the paper is to provide characterizations and algorithmic procedures for computing resilience with respect to different classes of systems and specifications.

\subsection{Resilience for LTL$_F$ Specifications}
Consider the system $\Sigma$ in \eqref{eqn:sys}, with a set of disturbances given by a ball with respect to infinity norm centered at zero: $D := \ball_{\varepsilon}(0)$.
We denote by $\xi(x,\varepsilon)$ the set of trajectories starting from some $x\in X$ with such a disturbance set:
\begin{equation}
\label{eqn:reach}
   \hspace{-0.2em} \xi(x,\varepsilon)\hspace{-0.2em} := \hspace{-0.2em}\left\{(x_0,x_1,x_2,\ldots)\,|\, x_0 \hspace{-0.2em}= \hspace{-0.2em}x, \, x_{k+1} \hspace{-0.2em}\in\hspace{-0.2em} f(x_k)\hspace{-0.2em}+\hspace{-0.2em}D \right\}.
\end{equation}
Note that $\xi(x,0)$ contains only the disturbance-free trajectory of the system (nominal trajectory). Similarly, given a set $A \subseteq X$, we define the set of trajectories $\xi(A,\varepsilon)=\bigcup\limits_{x \in A} \xi(x,\varepsilon).$

\begin{definition}
\label{def:resilience}
Consider the dynamical system $\Sigma$ in \eqref{eqn:sys}, an LTL$_F$ specification $\psi$ and a point $x \in X$. We define the {\it resilience} of the system $\Sigma$ with respect to the initial condition $x$ and the specification $\psi$ as a function $g_{\psi}:X\rightarrow \reals_{\ge 0}\cup\{+\infty\}$ with:
\begin{equation}
\label{eq:quan}\hspace{-0.2em} g_\psi(x) = \begin{cases}
\sup\left\{\varepsilon\ge 0\,|\,\xi(x,\varepsilon)\vDash \psi\right\}, & \text{ if } \xi(x,0)\vDash\psi\\
0, & \text{ if } \xi(x,0)\nvDash\psi,
\end{cases}
\end{equation}
where $\xi(x,\varepsilon)\vDash \psi$ indicates that all trajectories in $\xi(x,\varepsilon)$ satisfy the specification.
\end{definition}

Similarly, for a set $A \subseteq X$, the resilience metric over the set $A$ and specification $\psi$ can be defined as follows:
\begin{equation}
\hspace{-0.2em} g_\psi(x) = \begin{cases}
\sup\left\{\varepsilon\ge 0\,|\,\xi(x,A)\vDash \psi\right\}, & \text{ if } \xi(x,A)\vDash\psi\\
0, & \text{ if } \xi(x,A)\nvDash\psi,
\end{cases}
\end{equation}

\begin{remark}
The above definition assigns a non-negative value to resilience; hence we call it a \emph{resilience metric}. We can extend this definition to take both positive and negative values depending on the satisfaction of the specification by the nominal trajectory $\xi(x,0)$.
For a given $\psi$, let us define the function $h_{\psi}:X\rightarrow \reals\cup\{\pm\infty\}$ with
\begin{equation}
\label{eq:quan2}
h_\psi(x) := \begin{cases}
\sup\left\{\varepsilon\ge 0\,|\,\xi(x,\varepsilon)\vDash \psi\right\}, & \text{ if } \xi(x,0)\vDash\psi\\
 -\sup\left\{\varepsilon\ge 0\,|\,\xi(x,\varepsilon)\vDash \neg\psi\right\}, & \text{ if } \xi(x,0)\not\vDash\psi.
\end{cases}
\end{equation}
Note that
\begin{equation}
\label{eq:quan3 }
h_\psi(x) = \begin{cases}
g_\psi(x), & \text{ if } \xi(x,0)\vDash\psi\\
-g_{\neg\psi}(x), & \text{ if } \xi(x,0)\not\vDash\psi.
\end{cases}
\end{equation}
This definition is useful for guiding the system towards satisfying the specification and then maximizing the resilience by appropriately quantifying disturbances and decision variables. In this paper, we study the resilience of autonomous systems (i.e., systems without control inputs or decision variables). Therefore, we use Definition~\ref{def:resilience} in the rest of the paper.  
\end{remark}

\subsection{Structural Properties of Resilience}
In this subsection, we prove the structural properties of the resilience metric in Definition.~\ref{def:resilience} utilizing the inductive definition of temporal specifications.


We first have the following set of properties of the resilience.

\begin{proposition}
\label{prop:struct}
Consider the dynamical system $\Sigma$ in \eqref{eqn:sys}, an LTL$_F$ specification $\psi$ and a point $x \in {X}$. The following properties hold:
\begin{itemize}
    \item[(i)] When $\psi$ is the $\textsf{true}$ specification, $g_\psi(x) = +\infty$, for all $x\in X$.
    
    \item[(ii)] When $\psi$ is the $\textsf{false}$ specification, $g_\psi(x) = 0$, for all $x \in X$.

    \item[(iii)] For any specification $\psi = \psi_1\wedge\psi_2$, we have that $g_\psi(x) = \min\{g_{\psi_1}(x),g_{\psi_2}(x)\}$, for all $x \in X$.
    
    \item[(iv)] For any specification $\psi = \psi_1 \vee \psi_2$, we have that $g_\psi(x) \geq \max\{g_{\psi_1}(x),g_{\psi_2}(x)\}$, for all $x \in X$. 
    
    \item[(v)] For any specification $\psi = \neg\phi$, we have that $g_\psi(x)g_\phi(x) = 0$, for all $x \in X$.

    \item[(vi)] For $A \subset X$, $g_{\psi}(X)= \inf_{x \in X}g_{\psi}(x)$.

\end{itemize}
\end{proposition}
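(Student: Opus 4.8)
The plan is to reduce all six items to two supporting observations and then a short case split on the two branches of the piecewise definition~\eqref{eq:quan}.

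First I would prove a monotonicity lemma: since $D=\ball_\varepsilon(0)$ and $\ball_{\varepsilon'}(0)\subseteq\ball_\varepsilon(0)$ whenever $0\le\varepsilon'\le\varepsilon$, every trajectory realizable with disturbances of infinity norm at most $\varepsilon'$ is also realizable with disturbances of norm at most $\varepsilon$, so $\xi(x,\varepsilon')\subseteq\xi(x,\varepsilon)$ and likewise $\xi(A,\varepsilon')\subseteq\xi(A,\varepsilon)$. I would then record the consequence that, for any LTL$_F$ formula $\psi$, the set $S_\psi(x):=\{\varepsilon\ge 0\mid\xi(x,\varepsilon)\vDash\psi\}$ is downward closed in $\reals_{\ge0}$ (if every trajectory in $\xi(x,\varepsilon)$ satisfies $\psi$ and $\varepsilon'\le\varepsilon$, then a fortiori every trajectory in $\xi(x,\varepsilon')$ does), hence an interval with left endpoint $0$; and that $g_\psi(x)=\sup S_\psi(x)$ precisely when $0\in S_\psi(x)$, i.e.\ when $\xi(x,0)\vDash\psi$. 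The second observation is purely about real numbers: for any family $\{T_i\}_{i\in I}$ of downward-closed subsets of $\reals_{\ge0}$ each containing $0$, one has $\sup(\bigcap_{i\in I} T_i)=\inf_{i\in I}\sup T_i$ (with the usual conventions involving $+\infty$); the inequality ``$\le$'' is trivial from $\bigcap_i T_i\subseteq T_j$, and for ``$\ge$'' any $\varepsilon$ strictly below $\inf_i\sup T_i$ is strictly below each $\sup T_i$ and hence, by downward-closedness, lies in every $T_i$.

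Given these, I would dispatch the items as follows. Parts (i) and (ii) are definitional: every word satisfies $\textsf{true}$, so $S_{\textsf{true}}(x)=\reals_{\ge0}$ and $g_{\textsf{true}}(x)=+\infty$; no word satisfies $\textsf{false}$, in particular $\xi(x,0)\nvDash\textsf{false}$, so $g_{\textsf{false}}(x)=0$. For (iii), the set-level semantics of conjunction give $\xi(x,\varepsilon)\vDash\psi_1\wedge\psi_2$ iff $\xi(x,\varepsilon)\vDash\psi_1$ and $\xi(x,\varepsilon)\vDash\psi_2$, i.e.\ $S_{\psi_1\wedge\psi_2}(x)=S_{\psi_1}(x)\cap S_{\psi_2}(x)$; if $\xi(x,0)\vDash\psi_1\wedge\psi_2$ then $0$ lies in both factors and the real-number fact gives $g_{\psi_1\wedge\psi_2}(x)=\min\{g_{\psi_1}(x),g_{\psi_2}(x)\}$, while if $\xi(x,0)\nvDash\psi_1\wedge\psi_2$ the nominal trajectory fails one conjunct, so the corresponding $g_{\psi_i}(x)=0$ and both sides are $0$. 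For (iv), any trajectory satisfying $\psi_i$ satisfies $\psi_1\vee\psi_2$, so $S_{\psi_i}(x)\subseteq S_{\psi_1\vee\psi_2}(x)$; hence for each $i$ either $\xi(x,0)\vDash\psi_i$, and then $g_{\psi_i}(x)=\sup S_{\psi_i}(x)\le g_{\psi_1\vee\psi_2}(x)$, or $g_{\psi_i}(x)=0\le g_{\psi_1\vee\psi_2}(x)$; taking the maximum over $i$ yields the bound. For (v), $\xi(x,0)$ is nonempty (it contains the nominal trajectory), so it cannot have all its trajectories satisfying $\phi$ and all satisfying $\neg\phi$; thus $\xi(x,0)\nvDash\phi$ or $\xi(x,0)\nvDash\neg\phi$, forcing $g_\phi(x)=0$ or $g_{\neg\phi}(x)=0$, so the product vanishes (with the convention $0\cdot(+\infty):=0$). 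For (vi), $\xi(A,\varepsilon)=\bigcup_{x\in A}\xi(x,\varepsilon)$ gives $\xi(A,\varepsilon)\vDash\psi$ iff $\xi(x,\varepsilon)\vDash\psi$ for all $x\in A$, i.e.\ $S_\psi(A)=\bigcap_{x\in A}S_\psi(x)$; if $\xi(A,0)\vDash\psi$ the real-number fact yields $g_\psi(A)=\inf_{x\in A}g_\psi(x)$, and otherwise some $x_0\in A$ has $\xi(x_0,0)\nvDash\psi$, so $g_\psi(x_0)=0$ and both $g_\psi(A)$ and $\inf_{x\in A}g_\psi(x)$ equal $0$.

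The main obstacle is not any single item but the care required in the two supporting facts: establishing that $S_\psi(x)$ is genuinely downward closed (this is exactly what upgrades the $\min$ and $\inf$ relations from inequalities to equalities), and handling the edge cases of the sup-of-intersection identity — non-attained suprema, i.e.\ open versus half-open intervals, and the value $+\infty$ — while keeping the piecewise definition~\eqref{eq:quan} consistent across all six parts. Once these are in place, the rest is routine.
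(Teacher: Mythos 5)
Your proof is correct and follows the same overall strategy as the paper's: reduce each item to a statement about the sets $S_\psi(x)=\{\varepsilon\ge 0\mid \xi(x,\varepsilon)\vDash\psi\}$ and then invoke elementary facts about suprema. The genuine difference lies in the supporting lemma, and it works in your favour. For item (iii) the paper justifies $\sup(A\cap B)=\min\{\sup A,\sup B\}$ ``by the fact that the sets $A$ and $B$ are bounded from below''; boundedness from below does not give this identity (take $A=\{0,2\}$ and $B=\{0,1\}$: the intersection is $\{0\}$ with supremum $0$, yet $\min\{\sup A,\sup B\}=1$). The property that actually makes it work is the downward-closedness of $S_\psi(x)$, which you derive from the monotonicity $\xi(x,\varepsilon')\subseteq\xi(x,\varepsilon)$ --- precisely the content of Lemma~\ref{lem:disturb1}, which the paper states but never invokes in this proof. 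Your sup-of-intersection identity over an arbitrary index set also turns item (vi) into a real argument rather than the paper's ``direct consequence of the definition'', and you are more careful than the paper in tracking the second branch of~\eqref{eq:quan} (the case $\xi(x,0)\nvDash\psi_i$ in (iii) and (iv)) and in fixing the $0\cdot(+\infty)$ convention needed for (v). The remaining items coincide with the paper's reasoning: (iv) from $S_{\psi_i}(x)\subseteq S_{\psi_1\vee\psi_2}(x)$, and (v) from the fact that the singleton nominal set $\xi(x,0)$ cannot simultaneously satisfy $\phi$ and $\neg\phi$.
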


\begin{proof}
Properties (i), (ii), and (vi) are the direct consequence of Definition \ref{def:resilience}.

\textit{Proof for property (iii):}
For $x\in X$, consider sets $A=\{\epsilon\geq 0\mid \xi(x,\epsilon)\vDash\psi_1\}$ and $B=\{\epsilon\geq 0\mid \xi(x,\epsilon)\vDash\psi_2\}$. Then we can write $A\cap B=\{\epsilon\geq 0\mid \xi(x,\epsilon)\vDash\psi_1 \text{ and }\xi(x,\epsilon)\vDash\psi_2\}$. By the fact that sets $A$ and $B$ are bounded from below, we have $\sup\{A\cap B\}= \min\{\sup A, \sup B\}$. This implies $g_{\psi_1\wedge\psi_2}(x) = \min\{g_{\psi_1}(x),g_{\psi_2}(x)\}$.

\textit{Proof for property (iv):} For $x\in X$, consider sets $A=\{\epsilon\geq 0\mid \xi(x,\epsilon)\vDash\psi_1\}$ and $B=\{\epsilon\geq 0\mid \xi(x,\epsilon)\vDash\psi_2\}$. Then we can write $A\cup B=\{\epsilon\geq 0\mid \xi(x,\epsilon)\vDash\psi_1 \text{ or }\xi(x,\epsilon)\vDash\psi_1\}$. By the fact that $\sup\{A\cup B\}\geq \max\{\sup A, \sup B\}$, one has $g_{\psi_1\vee\psi_2}(x)\geq \max\{g_{\psi_1}(x),g_{\psi_2}(x)\}$.

\textit{Proof for property (v):} 
We consider two cases:\\ 
\textit{case (i):} for $x\in X$, consider $\xi(x,0)\vDash\phi$, then $g_{\psi=\neg\phi}(x)=0$. Thus $g_\psi(x)g_\phi(x) = 0$.\\
\textit{case (ii):} for $x\in X$, consider $\xi(x,0)\vDash\psi=\neg\phi$, then $g_{\phi}(x)=0$. Thus $g_\psi(x)g_\phi(x) = 0$. \\
This concludes the proof.
\end{proof}

We now provide an example showing that equality in the property (iv) does not hold in general.

\begin{example}
Consider the linear system $\Sigma$ with
\begin{align*}
    x_1(k+1)&=x_1(k)+3x_2(k)+d_1(k)\\
    x_2(k+1)&=3x_1(k)-2x_2(k)+d_2(k)
\end{align*}
Consider the sets $S_1=[2,3]\times [1,2]$, $S_2=[2,3]\times [2,3]$ and $S=S_1\cup S_2$. Consider the specifications $\psi_1=\nex S_1$, $\psi_2= \nex S_2$ and $\psi=\psi_1 \vee \psi_2 = \nex (S_1 \cup S_2)$. For the state $x(0)=(1,0.5)$, since $x(1)=(2.5,2)$ is both on the boundaries of the sets $S_1$ and $S_2$, one can easily check that $g_{\psi}(x)=0.5 > \max(g_{\psi_1}(x),g_{\psi_2}(x))=0$.
\end{example}

\medskip


\subsection{Resilience Properties for Closed Specifications}

Now, we provide sufficient conditions on the dynamics of the system $\Sigma$ and the specification $\psi$ allowing us to replace the \textit{$\sup$} operator with the \textit{$\max$} operator in the definition of $g_{\psi}$ in (\ref{eq:quan}), which makes the computation of the resilience metric tractable. To do this, we introduce the class of closed specifications defined below.
Given a sequence of trajectories $w_{x,i}$, $i \in \mathbb{N}$, with  $w_{x,i}=x_{0,i},x_{1,i},x_{2,i}\ldots$, the limit of $w_{x,i}$ is defined by $w_x=\lim_{i\rightarrow \infty} w_{x,i}=x_{0},x_{1},x_{2}\ldots$, where for all $j\in \mathbb{N}$, $x_j=\lim_{i\rightarrow \infty} x_{j,i}$. The sequence $w_{x,i}$ is called converging whenever $w_x$ exists. Hence, the limit of a sequence of trajectories can be seen as an element-wise limit of its components.

\begin{definition}[Closed specification]
\label{def:closed}
Consider a set $X \subseteq \mathbb{R}^n$, an LTL$_F$ formula $\psi$ is said to be closed if the following holds: for any converging sequence of trajectories $w_{x,i}$, $i \in \mathbb{N}$, if for all $i \in \mathbb{N}$, $w_{x,i} \vDash \psi$, then $w_x=\lim_{i\rightarrow \infty} w_{x,i} \vDash \psi$.
\end{definition}

Intuitively, the closedness property states that a specification is preserved when going from a sequence of trajectories to its element-wise limit. In Appendix \ref{sec:closed}, we provide different characterizations of the fragment of LTL$_F$ specifications that are closed.

Before stating the main result, we have the following auxiliary lemma.

\begin{lemma}
\label{lem:disturb1}
Consider the discrete-time system $\Sigma$ in \eqref{eqn:sys}, an $\varepsilon \geq 0$ and an LTL$_F$ formula $\psi$. If $\xi(x,\varepsilon) \vDash \psi$ then  $\xi(x,\varepsilon') \vDash \psi$ for any $\varepsilon' \leq \varepsilon$.
\end{lemma}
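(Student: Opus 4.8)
\textbf{Proof plan for Lemma~\ref{lem:disturb1}.}

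The plan is to exploit the monotonicity of the reachable-trajectory sets $\xi(x,\varepsilon)$ with respect to $\varepsilon$. First I would observe that if $\varepsilon' \leq \varepsilon$, then the disturbance set $D' = \ball_{\varepsilon'}(0)$ is contained in $D = \ball_{\varepsilon}(0)$, since $\|d\|_\infty \leq \varepsilon'$ implies $\|d\|_\infty \leq \varepsilon$. Consequently, from the definition in \eqref{eqn:reach}, every trajectory $(x_0,x_1,x_2,\ldots)$ satisfying $x_0 = x$ and $x_{k+1} \in f(x_k) + D'$ also satisfies $x_{k+1} \in f(x_k) + D$, so $\xi(x,\varepsilon') \subseteq \xi(x,\varepsilon)$.

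Given this set inclusion, the conclusion is almost immediate from the meaning of the satisfaction relation for sets of trajectories. Recall that $\xi(x,\varepsilon) \vDash \psi$ means that \emph{every} trajectory $w_x \in \xi(x,\varepsilon)$ satisfies $w_x \vDash \psi$. Since $\xi(x,\varepsilon') \subseteq \xi(x,\varepsilon)$, any $w_x \in \xi(x,\varepsilon')$ is in particular an element of $\xi(x,\varepsilon)$, hence $w_x \vDash \psi$. As this holds for all $w_x \in \xi(x,\varepsilon')$, we conclude $\xi(x,\varepsilon') \vDash \psi$, which is exactly the claim.

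There is essentially no hard step here: the lemma is a structural monotonicity observation, and the only thing to be careful about is the direction of the inclusion (smaller disturbance radius yields a smaller set of trajectories, which makes universal satisfaction easier to maintain). One minor point worth stating explicitly is that trajectories in $\xi(x,\varepsilon)$ are infinite sequences in \eqref{eqn:reach}, while satisfaction of an LTL$_F$ formula $\psi$ is evaluated on a finite prefix of length $N$; the argument is unaffected, since the inclusion $\xi(x,\varepsilon') \subseteq \xi(x,\varepsilon)$ and the universal quantification over trajectories behave identically regardless of how $\vDash \psi$ is defined on each trajectory. No induction on the structure of $\psi$ is needed.
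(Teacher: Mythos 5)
Your argument is correct and matches the paper's own proof, which likewise reduces the claim to the inclusion $\xi(x,\varepsilon') \subseteq \xi(x,\varepsilon)$ for $\varepsilon' \leq \varepsilon$ together with the universal quantification in the satisfaction relation for sets of trajectories. Your write-up simply spells out the details that the paper leaves implicit.
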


\begin{proof}
The proof follows directly from the fact that $\xi(x,\varepsilon') \subseteq \xi(x,\varepsilon)$ for $\varepsilon' \leq \varepsilon$.
\end{proof}

\begin{proposition}
Consider the discrete-time system $\Sigma$ in \eqref{eqn:sys} defined on a metric space $X$. Consider $x \in X$ and an LTL$_F$ specification $\psi$. If the map $f:\mathbb{R}^n \rightarrow \mathbb{R}^n$ is continuous and if $\psi$ is closed, then:
\begin{equation*}
g_\psi(x) = \begin{cases}
\max\left\{\varepsilon\ge 0\,|\,\xi(x,\varepsilon)\vDash \psi\right\}, & \text{ if } \xi(x,0)\vDash\psi,\\
0, & \text{ if } \xi(x,0)\nvDash\psi.
\end{cases}
\end{equation*}

\end{proposition}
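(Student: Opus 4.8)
The plan is to show that the supremum defining $g_\psi(x)$ in the case $\xi(x,0)\vDash\psi$ is actually attained, i.e., that the set $E := \{\varepsilon \ge 0 \mid \xi(x,\varepsilon)\vDash\psi\}$ has a maximum. By Lemma~\ref{lem:disturb1}, $E$ is a (nonempty, since $0\in E$) ``downward closed'' subset of $\reals_{\ge 0}$, hence an interval of the form $[0,\varepsilon^*)$ or $[0,\varepsilon^*]$ with $\varepsilon^* = \sup E = g_\psi(x)$ (or $E = [0,\infty)$, in which case $g_\psi(x) = +\infty$ and there is nothing to prove about attainment in the usual sense). So the whole task reduces to ruling out the half-open case: I must show that $\varepsilon^* \in E$, that is, $\xi(x,\varepsilon^*) \vDash \psi$.

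To do this, take an arbitrary trajectory $w_x = (x_0, x_1, x_2, \ldots) \in \xi(x,\varepsilon^*)$; I need $w_x \vDash \psi$. The idea is to approximate $w_x$ by trajectories lying in $\xi(x,\varepsilon)$ for $\varepsilon < \varepsilon^*$ and invoke closedness. Pick a sequence $\varepsilon_i \nearrow \varepsilon^*$ with $\varepsilon_i < \varepsilon^*$. The key construction is: for each $i$, build a trajectory $w_{x,i} \in \xi(x,\varepsilon_i)$ that is ``close'' to $w_x$ and converges to $w_x$ element-wise as $i \to \infty$. Concretely, $w_x$ is generated by disturbances $d_k := x_{k+1} - f(x_k) \in \ball_{\varepsilon^*}(0)$; scaling them, I set $d_k^{(i)} := (\varepsilon_i/\varepsilon^*)\, d_k \in \ball_{\varepsilon_i}(0)$ and let $w_{x,i}$ be the trajectory starting at $x_0 = x$ driven by $\{d_k^{(i)}\}_k$. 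Then $w_{x,i} \in \xi(x,\varepsilon_i) \subseteq \xi(x,\varepsilon^*)$, so $w_{x,i} \vDash \psi$ for every $i$ (since $\varepsilon_i \in E$). It remains to check $w_{x,i} \to w_x$ element-wise: this is where continuity of $f$ enters. Since $d_k^{(i)} \to d_k$ for each fixed $k$ as $i\to\infty$, an induction on $k$ using continuity of $f$ shows $x_{k,i} \to x_k$: the base case $x_{0,i} = x_0$ is immediate, and if $x_{k,i}\to x_k$ then $x_{k+1,i} = f(x_{k,i}) + d_k^{(i)} \to f(x_k) + d_k = x_{k+1}$. Hence $w_{x,i}$ is a converging sequence with limit $w_x$, all satisfying $\psi$, so closedness of $\psi$ gives $w_x \vDash \psi$. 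Since $w_x$ was arbitrary, $\xi(x,\varepsilon^*)\vDash\psi$, so $\varepsilon^* \in E$ and $g_\psi(x) = \max E$. The case $\xi(x,0)\nvDash\psi$ is immediate from the definition.

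One technical point worth handling carefully: the element-wise convergence $w_{x,i}\to w_x$ must hold regardless of how the ``for all $i$, $w_{x,i}\vDash\psi$'' trajectories are indexed, but since I explicitly construct $w_{x,i}$ from the same scaled disturbances, this is clean. A subtlety is the degenerate case $\varepsilon^* = 0$: then $E = \{0\}$ trivially has a maximum and there is nothing to prove. Another is that when $g_\psi(x) = +\infty$, the ``max'' statement should be read as the supremum being $+\infty$ and not attained by a finite value, which is consistent with the boxed formula (one can phrase the proposition so that the $\max$ is understood in the extended reals, or simply note $\xi(x,\varepsilon)\vDash\psi$ for all $\varepsilon$).

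The main obstacle is the approximation step — choosing the right perturbed trajectories $w_{x,i}$ that simultaneously (a) are valid trajectories under the smaller disturbance bound $\varepsilon_i$, so that they satisfy $\psi$, and (b) converge element-wise to the given trajectory $w_x$, so that closedness applies. The scaling trick $d_k^{(i)} = (\varepsilon_i/\varepsilon^*) d_k$ achieves both at once, and continuity of $f$ is exactly what propagates the convergence of disturbances through the dynamics to convergence of states. Everything else (downward-closedness of $E$ from Lemma~\ref{lem:disturb1}, the interval structure, the final appeal to Definition~\ref{def:closed}) is routine.
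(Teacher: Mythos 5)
Your proof is correct, and while it shares the paper's high-level skeleton (reduce to showing the set $E=\{\varepsilon\ge 0 \mid \xi(x,\varepsilon)\vDash\psi\}$ is closed, then combine closedness of $\psi$ with continuity of $f$), the key step is carried out differently — and, notably, more completely. The paper's proof takes a sequence $\varepsilon_i\to\varepsilon$ in $E$ and invokes Lemma~\ref{lem:continuity}, which gives the \emph{outer} semicontinuity inclusion $\lim_{i\to\infty}\xi(x,\varepsilon_i)\subseteq\xi(x,\varepsilon)$; together with closedness of $\psi$ this only shows that limits of trajectories from the $\xi(x,\varepsilon_i)$ satisfy $\psi$, which is the wrong direction for concluding that \emph{every} trajectory in $\xi(x,\varepsilon)$ does. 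What is actually needed is the reverse (inner) approximation: every $w_x\in\xi(x,\varepsilon^*)$ must be exhibited as an element-wise limit of trajectories lying in $\xi(x,\varepsilon_i)$ with $\varepsilon_i<\varepsilon^*$. Your scaling construction $d_k^{(i)}=(\varepsilon_i/\varepsilon^*)d_k$, followed by the induction on $k$ using continuity of $f$ to propagate $d_k^{(i)}\to d_k$ into $x_{k,i}\to x_k$, supplies exactly this missing direction, and your use of Lemma~\ref{lem:disturb1} to get $\varepsilon_i\in E$ is the right way to guarantee $w_{x,i}\vDash\psi$. In short, your argument buys a self-contained and airtight proof of the attainment of the supremum, whereas the paper's version leans on a semicontinuity inclusion that does not by itself close the argument; your handling of the degenerate cases $\varepsilon^*=0$ and $\varepsilon^*=+\infty$ is also a welcome addition that the paper omits.
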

\medskip
\begin{proof}
First using Theorem 2.28 of~\cite{rudin1976principles}, the supremum is equal to the maximum if the set $A=\left\{\varepsilon\ge 0\,|\,\xi(x,\varepsilon)\vDash \psi\right\}$ is closed. To show the result, it is sufficient to show that the set $A$ is closed. Consider a sequence $\varepsilon_i \in \mathbb{R}_{\geq 0}$, $i \in \mathbb{N}$, assume that for all $i \in \mathbb{N}$, $\xi(x,\varepsilon_i) \vDash \psi$ and let us show that $\xi(x,\varepsilon) \vDash \psi$, with $\varepsilon = \lim_{i \rightarrow \infty } \varepsilon_i$. Using the closedness of the specification $\psi$, one has that $ \lim_{i \rightarrow \infty }\xi(x,\varepsilon_i) \vDash \psi$, which implies from the continuity of the map $f$ and using Lemma~\ref{lem:continuity} (in Appendix) that $\lim_{i \rightarrow \infty }\xi(x,\varepsilon_i) \subseteq \xi(x,\lim_{i \rightarrow \infty } \varepsilon_i) = \xi(x,\varepsilon) \vDash  \psi$.
\end{proof}

We also provide sufficient conditions under which the map $g_{\psi}(x)$ remains bounded for an $x \in X$.

\begin{proposition}
Consider the dynamical system $\Sigma$ in \eqref{eqn:sys}, an LTL$_F$ specification $\psi$ and a point $x \in X$. If the set $\Lab^{-1}(\psi) \subseteq \mathbb{R}^n$ is a bounded subset of $\mathbb{R}^n$, then $g_{\psi}(x)$ is bounded.
\end{proposition}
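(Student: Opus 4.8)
The goal is to show that if $\Lab^{-1}(\psi)$ is bounded, then $g_\psi(x) < +\infty$. The plan is to argue by contradiction: suppose $g_\psi(x) = +\infty$, which means $\xi(x,\varepsilon) \vDash \psi$ for arbitrarily large $\varepsilon$, and hence (by Lemma~\ref{lem:disturb1}) for \emph{all} $\varepsilon \ge 0$. I will then exhibit a single trajectory in some $\xi(x,\varepsilon)$ that leaves the bounded set $\Lab^{-1}(\psi)$, contradicting satisfaction of $\psi$. The key observation is that $\Lab^{-1}(\psi)$ is the set of states whose label can possibly contribute to satisfying $\psi$ along any trajectory; more precisely, if a trajectory $w_x = (x_0, x_1, \ldots, x_{N-1})$ satisfies $\psi$, then every $x_k$ must lie in $\Lab^{-1}(\psi)$ — here one should read $\Lab^{-1}(\psi)$ as the union of preimages $\Lab^{-1}(\alpha)$ over all letters $\alpha$ appearing in some word accepted by $\psi$, so membership in this set is a necessary condition on each coordinate of any satisfying trajectory.

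The main steps are as follows. First, fix a bound $M$ with $\Lab^{-1}(\psi) \subseteq \ball_M(0)$. Second, assume $g_\psi(x) = +\infty$; by Lemma~\ref{lem:disturb1}, $\xi(x,\varepsilon) \vDash \psi$ for every $\varepsilon \ge 0$. Third, pick $\varepsilon$ large enough that one can construct a trajectory whose first step already escapes $\ball_M(0)$: take $\varepsilon > \|f(x)\| + M$, and choose the disturbance $d(0) \in D = \ball_\varepsilon(0)$ pointing away from the origin with $\|d(0)\|_\infty = \varepsilon$, so that $\|x_1\|_\infty = \|f(x) + d(0)\|_\infty \ge \|d(0)\|_\infty - \|f(x)\|_\infty > M$. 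Then $x_1 \notin \Lab^{-1}(\psi)$, so this particular trajectory in $\xi(x,\varepsilon)$ cannot satisfy $\psi$ — because the label $\Lab(x_1)$ is not among the letters usable by any word accepted by $\psi$, so no matter how the rest of the trajectory behaves, $\word \not\vDash \psi$. This contradicts $\xi(x,\varepsilon) \vDash \psi$, and therefore $g_\psi(x)$ is bounded.

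The main obstacle — and the place where care is needed — is making the link ``$w_x \vDash \psi$ implies every coordinate $x_k \in \Lab^{-1}(\psi)$'' precise. This requires fixing the intended reading of $\Lab^{-1}(\psi)$: it should be $\bigcup\{\Lab^{-1}(\alpha) : \alpha \in \Sigma_a \text{ and } \alpha \text{ occurs in some } \word \text{ with } \word \vDash \psi\}$. Under this reading the implication is immediate, since if $x_k \notin \Lab^{-1}(\psi)$ then the letter $\Lab(x_k)$ occurs in $\Lab(w_x)$ but in no accepted word, so $\Lab(w_x) \not\vDash \psi$. A degenerate case to note: if $\psi$ is unsatisfiable (e.g. equivalent to $\textsf{false}$), then $\Lab^{-1}(\psi) = \emptyset$ is bounded, and indeed $\xi(x,0) \nvDash \psi$ gives $g_\psi(x) = 0$, consistent with the claim. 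The argument also tacitly uses that $D = \ball_\varepsilon(0)$ genuinely contains points of infinity-norm exactly $\varepsilon$, which holds by construction, and that the labeling function is defined on all of $X$ with the trajectory remaining in $X$; if the dynamics can force $x_1$ outside $X$ this is even more directly a violation, so the conclusion is unaffected.
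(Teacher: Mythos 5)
Your proof is correct and follows essentially the same route as the paper's: assume $g_\psi(x)=+\infty$ and derive a contradiction from the fact that arbitrarily large disturbances push some reachable state outside the bounded set $\Lab^{-1}(\psi)$, which is incompatible with every trajectory satisfying $\psi$. Your version is in fact more explicit than the paper's one-line argument, which simply asserts the inclusion $\xi(x,g_{\psi}(x))\subseteq \Lab^{-1}(\psi)$ together with unboundedness of that reachable set; the escaping-trajectory construction and the precise reading of $\Lab^{-1}(\psi)$ as the union of preimages of letters occurring in accepted words are exactly the details the paper leaves implicit.
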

\begin{proof}
Assume that $g_{\psi}(x)$ is unbounded, hence the set $\xi(x,g_{\psi}(x))$ $\subseteq \Lab^{-1}(\psi)\subseteq \mathbb{R}^n $ is unbounded, which contradicts the fact that the set $\Lab^{-1}(\psi) \subseteq \mathbb{R}^n$ is bounded. Hence, $g_{\psi}(x)$ is bounded.
\end{proof}

\subsection{Resilience Properties for Linear Systems and Convex Specifications}

In this part, we introduce the concept of convex specifications and present an efficient approach for the computation of the resilience metric for the class of linear systems and convex specifications.

\begin{definition}[Convex specification]
\label{def:convex} An LTL$_F$ formula $\psi$ is said to be convex if the following holds: for trajectories $w_{x,i}$, $i \in \{1,2\}$, if $w_{x,i} \vDash \psi $, then for any trajectory $w_x=\lambda w_{x,1} + (1-\lambda) w_{x,2}$, $\lambda \in [0,1]$, we have that $w_x \vDash \psi$.
\end{definition}
Intuitively, the convexity property states that the specification is preserved under a convex hull operator. In Appendix \ref{sec:convex} provides a characterization of the fragment of LTL$_F$ specifications that are convex. 

Now, consider the case where the objective is to compute $g_{\psi}(A)$ for a set $A \subset \mathbb{R}^n$. A straightforward approach that was mentioned earlier in the property (vi) of Proposition \ref{prop:struct} is to use the fact that $g_{\psi}(A) = \inf_{x \in A}g_{\psi}(x)$, which requires computing $g_{\psi}(x)$ for all $x \in X$ and that can be computationally infeasible for continuous sets. In this part, we present an efficient approach to compute $g_{\psi}(A)$ for the case where the set $A$ can be written as a convex closure of a finite number of points. 
Our result relies on the superposition principle for linear systems and the introduced class of convex specifications.

\begin{theorem}
\label{thm:conv}
Consider the discrete-time linear system in \eqref{eqn:sys}. Consider $A=\conv(c_1,c_2,\ldots,c_p) \subset \mathbb{R}^n$ and a convex specification $\psi$. We have $g_{\psi}(A)=\min\limits_{i=1,2,\ldots,p} g_{\psi}(c_i)$.
\end{theorem}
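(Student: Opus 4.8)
The plan is to prove the two inequalities $g_{\psi}(A) \leq \min_i g_{\psi}(c_i)$ and $g_{\psi}(A) \geq \min_i g_{\psi}(c_i)$ separately. The first inequality is immediate: each $c_i \in A$, so by property (vi) of Proposition~\ref{prop:struct} (monotonicity of the infimum, or simply $\xi(c_i,\varepsilon) \subseteq \xi(A,\varepsilon)$), we have $g_{\psi}(A) = \inf_{x \in A} g_{\psi}(x) \leq g_{\psi}(c_i)$ for each $i$, hence $g_{\psi}(A) \leq \min_i g_{\psi}(c_i)$. So the substance is in the reverse inequality.

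For the reverse direction, set $\varepsilon^* := \min_i g_{\psi}(c_i)$ and let $\varepsilon < \varepsilon^*$ be arbitrary. I would show $\xi(A,\varepsilon) \vDash \psi$; taking the supremum over such $\varepsilon$ then gives $g_{\psi}(A) \geq \varepsilon^*$. Pick any trajectory $w_x = (x_0, x_1, \ldots) \in \xi(A,\varepsilon)$ with $x_0 \in A$, so $x_0 = \sum_{i=1}^p \lambda_i c_i$ with $\lambda_i \geq 0$, $\sum_i \lambda_i = 1$. The disturbance realization is some sequence $d_k \in \ball_{\varepsilon}(0)$. The key step is the superposition principle for the linear system $x(k+1) = Ax(k) + d(k)$: decompose $w_x = \sum_{i=1}^p \lambda_i w_{x,i}$, where $w_{x,i}$ is the trajectory started at $c_i$ driven by the \emph{same} disturbance sequence $(d_k)$ scaled appropriately — concretely, let $w_{x,i}$ be the trajectory from $c_i$ under disturbance sequence $(d_k)$ itself; then $\sum_i \lambda_i w_{x,i}$ starts at $\sum_i \lambda_i c_i = x_0$ and, since $\sum_i \lambda_i = 1$, is driven by disturbance $\sum_i \lambda_i d_k = d_k$, so it equals $w_x$. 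Each $w_{x,i} \in \xi(c_i, \varepsilon)$ since it uses disturbances in $\ball_{\varepsilon}(0)$, and because $\varepsilon < g_{\psi}(c_i)$, Lemma~\ref{lem:disturb1} gives $\xi(c_i,\varepsilon) \vDash \psi$, hence $w_{x,i} \vDash \psi$. Now convexity of $\psi$ (Definition~\ref{def:convex}), applied inductively from the two-point case to the $p$-point convex combination, yields $w_x = \sum_i \lambda_i w_{x,i} \vDash \psi$. Since $w_x$ was an arbitrary trajectory in $\xi(A,\varepsilon)$, we get $\xi(A,\varepsilon) \vDash \psi$.

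Two small technical points need care. First, the definition of convexity in Definition~\ref{def:convex} is stated only for two trajectories; I would note that it extends to any finite convex combination by a routine induction, writing $\sum_{i=1}^p \lambda_i w_{x,i} = \lambda_1 w_{x,1} + (1-\lambda_1)\big(\sum_{i=2}^p \frac{\lambda_i}{1-\lambda_1} w_{x,i}\big)$ when $\lambda_1 < 1$ (the case $\lambda_1 = 1$ being trivial), and using the inductive hypothesis on the inner combination. Second, one must handle the case $\varepsilon^* = +\infty$: then $\varepsilon$ ranges over all of $\reals_{\geq 0}$ and the same argument shows $\xi(A,\varepsilon) \vDash \psi$ for every $\varepsilon \geq 0$, so $g_{\psi}(A) = +\infty$ as well. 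Also the degenerate situation where $\xi(x_0,0) \nvDash \psi$ for some $x_0 \in A$ forces $\min_i g_{\psi}(c_i) = 0$ via property (vi) applied to a vertex or a limiting argument — but actually this is subsumed: if some $g_{\psi}(c_i) = 0$ then $\varepsilon^* = 0$ and the claimed equality reads $g_{\psi}(A) = 0$, which follows since $g_{\psi}(A) \leq g_{\psi}(c_i) = 0$.

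The main obstacle I anticipate is making the superposition bookkeeping airtight: one must verify that the \emph{same} disturbance sequence works simultaneously for the decomposition (so that the convex combination of the per-vertex trajectories reproduces exactly $w_x$, not merely a trajectory in $\xi(A,\varepsilon)$), which relies essentially on linearity of $f$ together with $\sum_i \lambda_i = 1$. Everything else — the two inequalities, the induction for convexity, the $+\infty$ case — is routine once that decomposition is set up correctly.
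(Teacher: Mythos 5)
Your proposal is correct and follows essentially the same route as the paper's proof: the easy inequality via property (vi) of Proposition~\ref{prop:struct}, and the reverse inequality by superposing the per-vertex trajectories driven by the \emph{same} disturbance sequence, then invoking Lemma~\ref{lem:disturb1} and convexity of $\psi$. Your additional care — working with $\varepsilon < \varepsilon^*$ and passing to the supremum rather than plugging in $\varepsilon^*$ directly, and spelling out the induction extending the two-trajectory convexity definition to $p$-point combinations — tightens points the paper glosses over, but does not change the argument.
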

\begin{proof}
First, one has from (vi) in Proposition \ref{prop:struct} that $g_{\psi}(X) \leq \min\limits_{i=1,2,\ldots,p} g_{\psi}(c_i)$. To show that $g_{\psi}(A) \geq \min\limits_{i=1,2,\ldots,p} g_{\psi}(c_i)$ , let us denote $\varepsilon=\min\limits_{i=1,2,\ldots,p} g_{\psi}(c_i)$ and consider $x \in A=\conv(c_1,c_2,\ldots,c_p)$. We have the existence of $\alpha_1,\alpha_2,\ldots,\alpha_p \geq 0$ such that $\sum\limits_{i=1}^p \alpha_ic_i=x$ and $\sum\limits_{i=1}^p \alpha_i=1$.
To show the result, let us show that 
\begin{equation}
\label{eqn:thm}
\xi(x,\varepsilon) \subseteq \conv\left(\xi(c_1,\varepsilon),\xi(c_2,\varepsilon),\ldots,\xi(c_p,\varepsilon)\right).
\end{equation}

Consider a state trajectory $w_x=x_0,x_1,x_2,\ldots \in \xi(x,\varepsilon)$, we have the existence of $d_0,d_1,d_2,\ldots \in \ball_{\varepsilon}(0)$ such that $x_0=x$ and $x_{k+1}=Ax_k+d_k$ for all $k \in \{0,\ldots,N-1\}$. Now for $i \in \{1,2,\ldots,p\}$, consider $w_{x,i}=x_{0,i},x_{1,i},x_{2,i},\ldots \in \xi(c_i,\varepsilon)$ defined as follows: $x_{0,i}=c_i$ and $x_{k+1,i}=Ax_{k,i}+d_k$. Hence, using the fact that $\sum\limits_{i=1}^p \alpha_i=1$ one has that for all $i \in \{1,2,\ldots,p\}$ and for all $k \in \{0,\ldots,N-1\}$, $$x_{k}=\sum\limits_{i=1}^p (\alpha_ix_{k,i}+d_k) \in \conv({x_{k,1},x_{k,2},\ldots,x_{k,p}}),$$ which in turn implies that $w_x \in \conv({w_{x,1},w_{x,2},\ldots,w_{x,p}})$ and \eqref{eqn:thm} holds. 

Moreover, since $\varepsilon \leq g_{\psi}(c_i)$, $i\in \{1,2,\ldots,p\}$, we have from Lemma~\ref{lem:disturb1} that for all $i \in \{1,2,\ldots,p\}$, $\xi(c_i,\varepsilon)\vDash \psi$. Hence using the fact that $\xi(x,\varepsilon) \subseteq \conv({\xi(c_1,\varepsilon),\xi(c_2,\varepsilon),\ldots,\xi(c_p,\varepsilon)})$ and from the convexity of the specification $\psi$, we conclude that $\xi(x,\varepsilon)\vDash \psi$. Hence, $g_{\psi}(x)=\min\limits_{i=1,2,\ldots,p} g_{\psi}(c_i)$.
\end{proof}

In Sections~\ref{sec:linear}-\ref{sec:nonlinear}, we provide results on the computation of the resilience metric for linear and nonlinear systems for various specifications.

\section{Computation of Resilience for Linear Systems}
\label{sec:linear}

In this section, we show how the resilience metric can be computed exactly for some classes of specifications, such as the exact time reachability and finite-horizon safety. Moreover, we show how it can be approximated, up to any accuracy, for finite-horizon reachability properties.

\subsection{Exact-Time Reachability}
\label{sec:linear_reach}
Consider the linear discrete-time system $\Sigma$ defined by:
\begin{equation}
\label{eq:linear}
    x(k+1) = A x(k) + d(k),
\end{equation}
with $x(k),d(k) \in \mathbb{R}^n, k \in\mathbb{N}$ and reachability at a specific time $N\in\mathbb{N}$ as $\psi = \nex^N \Gamma$, for some polytopic set $\Gamma$.
We have the following result showing that the computation of the resilience metric for linear systems and reachability at a specific time point specification boils down to a linear optimization problem. 
\begin{theorem}
\label{thm:lin_reachN}
Consider the linear system $\Sigma$ in \eqref{eq:linear} and the specification $\psi=\nex^N \Gamma$, for $N \in \mathbb{N}$, where $\Gamma$ is a polytope $\Gamma = \{x\in X\,|\, Gx\le H\}$ with $G \in \mathbb{R}^{q \times n} $ and $H \in  \mathbb{R}^q$, $q \in \mathbb{N}$. We have
\begin{equation}
\label{eqn:opt1}
g_{\psi}(x) \!=\! \min \{\varepsilon\ge 0\,|\, P\ge 0,P A_b \!=\! E, P B_b \le \varepsilon F(x)
\}
\end{equation}
with
\begin{itemize}
    \item $
    A_b := \left[\begin{array}{cc}
         \mathbb I \\
         - \mathbb I
    \end{array}
    \right] \in \mathbb{R}^{2nN\times n N} $ and 
   $B_b := \left[\begin{array}{c}
         \mathbf 1 \\
         \mathbf 1
    \end{array}
    \right] \in \mathbb{R}^{2nN}$
\item $E=[GA^{N-1}, GA^{N-2}, \ldots, GA, G] \in \mathbb{R}^{q\times nN}$
\item $F(x)=H - GA^Nx \in \mathbb{R}^q$.
\end{itemize}
\end{theorem}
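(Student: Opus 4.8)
The strategy is to unwind the definition of $g_\psi$ for the specification $\psi = \nex^N\Gamma$ and translate the statement $\xi(x,\varepsilon)\vDash\psi$ into a finite system of linear inequalities, then invoke linear programming duality (Farkas' lemma) to dualize the resulting feasibility condition. First I would write down explicitly the $N$-step solution map: a trajectory in $\xi(x,\varepsilon)$ has $x_N = A^N x + \sum_{k=0}^{N-1} A^{N-1-k} d_k$ with each $d_k \in \ball_\varepsilon(0)$, i.e. $\|d_k\|_\infty \le \varepsilon$. Stacking $d := (d_0^T,\ldots,d_{N-1}^T)^T \in \mathbb{R}^{nN}$, the box constraint $\|d_k\|_\infty\le\varepsilon$ for all $k$ becomes $A_b d \le \varepsilon B_b$ with $A_b,B_b$ as in the statement, and $x_N = A^N x + E d$ with $E = [A^{N-1}\ldots A\ \mathbb{I}]$ premultiplied by $G$ once we pass to the polytope constraints — note $Gx_N = GA^N x + E d$ with the $E$ of the theorem.

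The key observation is that $\xi(x,\varepsilon)\vDash\psi$ holds iff \emph{every} reachable $x_N$ lies in $\Gamma$, i.e. $G(A^N x + \sum A^{N-1-k}d_k) \le H$ for all admissible $d$, equivalently $E d \le F(x)$ for all $d$ with $A_b d \le \varepsilon B_b$. This is a robust linear constraint, and since the disturbance set $\{d : A_b d \le \varepsilon B_b\}$ is a polytope (a box), the condition ``$Ed \le F(x)$ for all such $d$'' is equivalent, row by row, to $\max_{A_b d \le \varepsilon B_b} E_j d \le F_j(x)$ for each row $j$. Dualizing each of these $q$ linear programs via strong LP duality, $\max\{E_j d : A_b d \le \varepsilon B_b\} = \min\{\varepsilon\, p_j^T B_b : p_j \ge 0,\ p_j^T A_b = E_j\}$ (feasibility of the dual is clear since $A_b$ has full column rank and the primal is bounded because the box is compact). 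Collecting the $q$ dual multiplier rows into a matrix $P \ge 0$ with $P A_b = E$, the requirement becomes: there exists such a $P$ with $P B_b \le \varepsilon F(x)$ componentwise. Hence $\xi(x,\varepsilon)\vDash\psi$ iff the set $\{P \ge 0 : PA_b = E,\ PB_b \le \varepsilon F(x)\}$ is nonempty, and taking the supremum (here an infimum, since smaller $\varepsilon$ is easier — consistent with Lemma~\ref{lem:disturb1}) over feasible $\varepsilon$ gives \eqref{eqn:opt1}.

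A few points need care. One must check the boundary case $\xi(x,0)\vDash\psi$: at $\varepsilon = 0$ the condition reads $GA^N x \le H$, i.e. $F(x) \ge 0$, which is exactly when the second branch of \eqref{eq:quan} does not apply, so the formula correctly returns $0$ (or a positive value) rather than being vacuous; if $F(x)$ has a negative component then no $P \ge 0$ can satisfy $PB_b \le \varepsilon F(x)$ for any $\varepsilon \ge 0$ since $PB_b \ge 0$, and one should interpret the optimization as returning $0$ in that case, matching the definition. One should also confirm that ``$\sup$'' in Definition~\ref{def:resilience} is attained as a ``$\min$'': the feasible set of $\varepsilon$ is $\{\varepsilon : \exists P\ge 0, PA_b = E, PB_b \le \varepsilon F(x)\}$, which is a closed half-line $[\varepsilon^*,\infty)$ reversed — actually $[0,\varepsilon^*]$ — by Lemma~\ref{lem:disturb1}, closed because $\nex^N\Gamma$ with $\Gamma$ a (closed) polytope is a closed specification, so the supremum is a maximum on the primal side and equivalently the stated $\min$ over the dual description. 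The main obstacle, and the only genuinely substantive step, is the correct and rigorous application of LP duality to convert the universally-quantified (robust) constraint into the existentially-quantified multiplier form; the rest is bookkeeping with the block matrices $A_b$, $B_b$, $E$, $F(x)$ and verifying dimensions.
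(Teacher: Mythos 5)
Your proposal follows essentially the same route as the paper: translate $\xi(x,\varepsilon)\vDash\nex^N\Gamma$ into the robust linear constraint $Ed\le F(x)$ for all $d$ in the scaled box $\{d: A_bd\le\varepsilon B_b\}$, and then dualize via Farkas' lemma / LP duality to obtain the nonnegative multiplier matrix $P$ with $PA_b=E$. The identification of $A_b,B_b,E,F(x)$, the boundary case where $F(x)$ has a negative component, and the attainment of the supremum are all handled correctly. The one step that does not close is the last one. Your own duality computation gives $\max\{E_jd : A_bd\le\varepsilon B_b\}=\min\{\varepsilon\,p_j^TB_b : p_j\ge 0,\ p_j^TA_b=E_j\}$, so the existential form of the robust constraint is $\varepsilon\,PB_b\le F(x)$, i.e.\ $PB_b\le\tfrac{1}{\varepsilon}F(x)$ for $\varepsilon>0$ --- not $PB_b\le\varepsilon F(x)$ as you then write. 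With the correct inequality, the feasible $\varepsilon$ form an interval $[0,\varepsilon^\ast]$ and $g_\psi(x)=\varepsilon^\ast$ is a supremum; reaching the stated $\min$ in \eqref{eqn:opt1} requires the substitution $\mu=1/\varepsilon$, under which the constraint becomes $PB_b\le\mu F(x)$ and one minimizes $\mu$. Your remark that the supremum is ``here an infimum, since smaller $\varepsilon$ is easier'' conflates these two directions rather than performing the reciprocal change of variable. To be fair, the paper's own proof stops at $\max\{\varepsilon\ge 0 : P\ge 0,\ PA_b=E,\ PB_b\le\tfrac{1}{\varepsilon}F(x)\}$ (obtained via the rescaling $Y=d/\varepsilon$) and simply asserts equivalence with \eqref{eqn:opt1}, so this bookkeeping is glossed over there as well; but in your write-up the inequality $PB_b\le\varepsilon F(x)$ is presented as a consequence of the duality step, which it is not, and this is the point you would need to repair.
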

\begin{proof}
We have that 
\begin{align*}
\begin{cases}
g_\psi(x) = 0\qquad \text{ if } \quad GA^Nx > H, \text{ else: }\\
g_\psi(x) = \max\,\{\varepsilon\ge 0,\,s.t.\, \text{ for all }d_0,\ldots,d_{N-1}\in\ball_\varepsilon(0)\\
\qquad\qquad\qquad G(A^Nx+A^{N-1}d_0+\ldots+d_{N-1})\le H.\}
\end{cases}
\end{align*}

Hence, the computation of the resilience metric $g_{\psi}$ requires solving the following optimization problem
\begin{align*}
\max \{\varepsilon\ge 0\,|\, E Y\le \frac{1}{\varepsilon}F(x),\text{for all } Y \text{ satisfying } A_b Y\le B_b\},
\end{align*}
with $Y:=\frac{1}{\varepsilon}[d_0;d_1;\ldots,d_{N-1}]$ being the free variables,
$F(x) := H - GA^Nx$,
$E:=[GA^{N-1}, GA^{N-2}, \ldots, GA, G]$,
and $A_b Y\le B_b$ represents the inequality $\|Y\|_\infty\le 1$ in matrix form using the matrices $A_b$ and $B_b$ defined above.\footnote{In the general case when the disturbance set has a template in the form of a polytope $\{w\,|\, A_w w\le B_w\}$, we have
$A_b = \text{diag}(A_w,A_w,\ldots,A_w)$ and
    $B_b =[B_w;B_w;\ldots;B_w]$.}
\begin{equation*}
    A_b = \left[\begin{array}{cc}
         \mathbb I  \\
         - \mathbb I
    \end{array}
    \right],
    B_b = \left[\begin{array}{c}
         \mathbf 1 \\
         \mathbf 1
    \end{array}
    \right].
\end{equation*}

According to the affine form of Farkas' Lemma \cite{schrijver1998theory}, see Theorem~\ref{thm:farkas} in Appendix,
one gets the following optimization problem:
\begin{equation*}
g_{\psi}(x) = \max \{\varepsilon\ge 0\,|\, P\ge 0, P A_b = E, P B_b \le \frac{1}{\varepsilon} F(x)
\},
\end{equation*}
which is equivalent to the optimization problem in \eqref{eqn:opt1}.
\end{proof}

\medskip

We have the following illustrative example for the results of Theorems~\ref{thm:lin_reachN} and~\ref{thm:conv}.

\begin{example}
\label{examp:1}
Consider the linear dynamical system $\Sigma$ in \eqref{eq:linear} with 
$A=\begin{pmatrix}
0.1 & -1 \\
-0.5 & -0.2
\end{pmatrix}$
and the polytope $\Gamma = \{x\in X\,|\, Gx\le H\}$ with
$$G=\begin{pmatrix}
0 & -0.2747 \\
0.3714 & 0 \\
0 & 0.3714 \\
-0.2747 & 0
\end{pmatrix},\text{ and }
H=\begin{pmatrix}
0.9615 \\
0.9285 \\
0.9285 \\
0.9615
\end{pmatrix}.$$
Consider the specification defined by $\psi = \nex^3 \Gamma$. In simple words, the objective is to reach the set $\Gamma$ in three time steps, the set $\Gamma$ is defined as an interval $\Gamma=[-3.5,3.5]\times[-2.5,2.5]$. Figure~\ref{fig:robust} represents the resilience metric $g_\psi(x)$ on the domain $A=[-4,6]\times[-4,6]$. Moreover, the numerical results show that $g_{\psi}(X)=0.1166$. On the other hand, we have that the set $A=[-4,6]\times[-4,6] = \conv(c_1,c_2,c_3,c_4)$, with $c_1=(-4,-4)$, $c_2=(-4,6)$, $c_3=(6,-4)$ and $c_4=(6,6)$. The resilience metrics of the vectors $c_1$, $c_2$, $c_3$ and $c_4$ are $g_{\psi}(c_1)=0.417$, $g_{\psi}(c_2)=0.1166$, $g_{\psi}(c_3)=0.1380$ and $g_{\psi}(c_4)=0.1203$. One can easily check that $g_{\psi}(X)=\min\limits_{i=1,2,3,4} g_{\psi}(c_i)=g_{\psi}(c_2)$, which is consistent with the result of Theorem~\ref{thm:conv}.
\end{example}
\begin{figure}[!t]
	\begin{center}
		\includegraphics[scale=0.45]{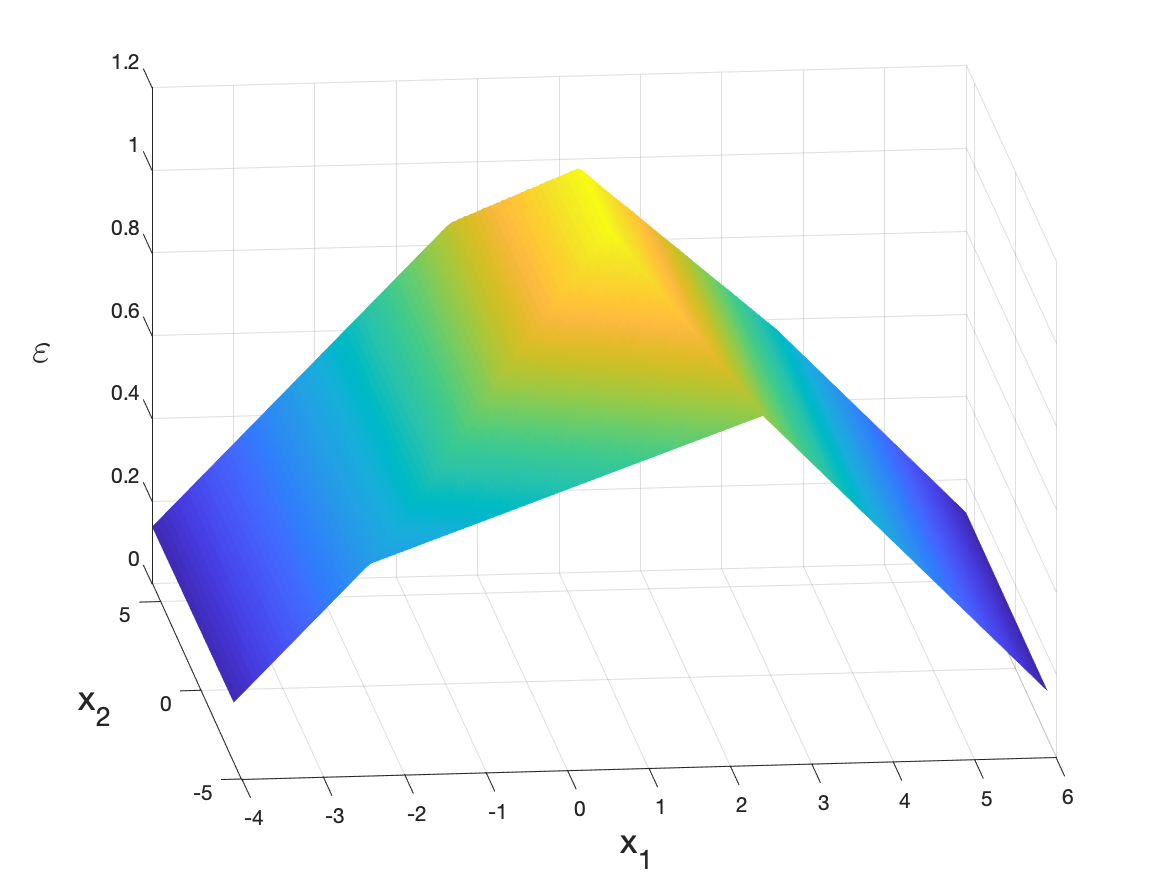}\\
	\end{center}
        \vspace{-0.4cm}
	\caption{Function $g_{\psi}(x)$ for Example \ref{examp:1}.}
	\label{fig:robust}	
\end{figure}
\subsection{Finite-Horizon Safety Specifications}
In this section, we provide a closed-form expression of the resilience metric for the case of linear systems and finite-horizon safety specifications.  
\begin{theorem}
\label{thm:fin_safe}
Consider the linear system $\Sigma$ in \eqref{eq:linear} with finite-horizon safety specification $\psi = \square^N \Gamma$, where $\Gamma=\{x\in X\,|\, Gx\le H\}$, for some $N \in \mathbb{N}$. We have
\begin{align*}
& g_{\psi}(x) = \max \{\varepsilon \geq 0 \mid P\ge 0, P A_b = E, P B_b \le \varepsilon F(x)\},\\
\end{align*}
\begin{itemize}
    \item
    $
    A_b = \left[\begin{array}{cc}
         \mathbb I  \\
         - \mathbb I
    \end{array}
    \right] \in \mathbb{R}^{2nN\times n N} $ and 
   $B_b = \left[\begin{array}{c}
         \mathbf 1 \\
         \mathbf 1
    \end{array}
    \right] \in \mathbb{R}^{2nN}$   
\item
\begin{align*}
& E = \left[
\begin{array}{cccccc}
G & 0 & 0 & \ldots & 0 & 0\\
GA & G & 0 & \ldots & 0 & 0\\
GA^2 & GA & G & \ldots & 0 & 0\\
\vdots & \vdots & \vdots & \vdots & \vdots & \vdots\\
GA^{N-1} & GA^{N-2} & GA^{N-3} & \ldots & GA & G  
\end{array}
\right],\\
& F(x) =[H - GAx \; H - GA^2x \; \ldots \; H - GA^Nx].
\end{align*}
\end{itemize}
\end{theorem}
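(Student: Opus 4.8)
The plan is to follow the same route as the proof of Theorem~\ref{thm:lin_reachN}, the only genuinely new feature being that $\square^N\Gamma$ constrains the trajectory at \emph{every} step of the horizon rather than at a single instant. First I would dispose of the degenerate case: if the nominal trajectory $\xi(x,0)$ violates $\square^N\Gamma$ — i.e.\ $Gx\le H$ fails or $GA^kx\le H$ fails for some $k\in\{1,\dots,N\}$, equivalently $F(x)$ is not componentwise nonnegative — then $g_\psi(x)=0$ by Definition~\ref{def:resilience}, and one checks the displayed program returns $0$ in this situation. Note that the $k=0$ constraint $Gx\le H$ does not involve any disturbance, so it belongs entirely to this $\xi(x,0)\vDash\psi$ precondition and is not part of the program. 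In the remaining case $F(x)\ge 0$, and because $\Gamma$ is a polytope (hence $\square^N\Gamma$ is a closed specification in the sense of Definition~\ref{def:closed}) and $f(x)=Ax$ is continuous, the proposition on closed specifications above lets us replace the $\sup$ defining $g_\psi$ by a $\max$, matching the statement.

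For the main case I would unroll \eqref{eq:linear} to get $x_k = A^kx+\sum_{j=0}^{k-1}A^{k-1-j}d_j$ for $k\in\{1,\dots,N\}$. Then $\xi(x,\varepsilon)\vDash\square^N\Gamma$ is equivalent to: for every $(d_0,\dots,d_{N-1})$ with $\|d_j\|_\infty\le\varepsilon$ and every $k\in\{1,\dots,N\}$ one has $Gx_k\le H$. Stacking the $N$ vector inequalities and reading off the convolution coefficients, this is exactly $E\,[d_0;\dots;d_{N-1}]\le F(x)$ for all admissible $d$, with the block lower-triangular Toeplitz matrix $E$ and the right-hand side $F(x)=[H-GAx;\ \dots;\ H-GA^Nx]$ as displayed in the statement. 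The point worth stressing is that ``for all $d$'' is the outermost quantifier: reusing one disturbance sequence across all $N$ steps is harmless, so this is a single robust linear feasibility problem, just with a taller constraint matrix than in the exact-time reachability case.

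Next I would rescale, writing $Y:=\tfrac{1}{\varepsilon}[d_0;\dots;d_{N-1}]$ so that the box $\|d_j\|_\infty\le\varepsilon$ becomes $A_bY\le B_b$, which gives
\[
g_\psi(x)=\max\{\varepsilon\ge 0 \mid \forall Y,\ A_bY\le B_b \ \Rightarrow\ EY\le \tfrac{1}{\varepsilon} F(x)\}.
\]
Applying the affine form of Farkas' Lemma (Theorem~\ref{thm:farkas}) row by row to each scalar constraint (a row of $EY\le \tfrac{1}{\varepsilon}F(x)$) eliminates the universally quantified $Y$ in favour of a multiplier matrix $P\ge 0$ with $PA_b=E$ and $PB_b\le\tfrac{1}{\varepsilon} F(x)$; rearranging exactly as in the proof of Theorem~\ref{thm:lin_reachN} produces the claimed optimization problem. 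The part I expect to require the most care is bookkeeping rather than ideas: making the horizon indexing ($k=1,\dots,N$, with $k=0$ absorbed into the precondition), the block structure of $E$, and the direction of the $\varepsilon$ versus $\tfrac{1}{\varepsilon}$ scaling all simultaneously consistent with the matrices as written; the Farkas dualization and the case split are routine once Theorem~\ref{thm:lin_reachN} is available.
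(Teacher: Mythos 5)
Your proposal is correct and follows essentially the same route as the paper's proof: unroll the dynamics, stack the $N$ safety constraints into a single robust linear feasibility problem $EY\le \tfrac{1}{\varepsilon}F(x)$ over the rescaled box $A_bY\le B_b$, and eliminate the universal quantifier via the affine Farkas lemma (Theorem~\ref{thm:farkas}), exactly as in Theorem~\ref{thm:lin_reachN}. Your extra remarks on the degenerate case and on the disturbance-free $k=0$ constraint are consistent with, and slightly more explicit than, what the paper writes.
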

\begin{proof}
The resilience metric is given by
\begin{align*}
g_\psi(x) = &\max\,\varepsilon\ge 0,
\text{s.t for all }d_0,\ldots,d_{N-1}\in\ball_\varepsilon(0),\\
&\begin{cases}
Gx\le H\\
G(Ax+d_0)\le H\\
G(A^2x+Ad_0+d_1)\le H\\
\vdots\\
G(A^Nx+A^{N-1}d_0+\ldots+d_{N-1})\le H.
\end{cases}
\end{align*}
Hence, the computation of $g_{\psi}$ requires solving an optimization of the form
\begin{align*}
\max \{\varepsilon\ge 0\,|\, E Y\le \frac{1}{\varepsilon}F(x),\,\,\forall Y \text{ with } A_b Y\le B_b\},
\end{align*}
with $Y:=\frac{1}{\varepsilon}[d_0^T,d_1^T,\ldots,d_{N-1}^T]^T$ being the free variables, where $A_b Y\le B_b$ represents the inequality $\|Y\|_\infty\le 1$ in matrix form. Hence, the result holds using Theorem~\ref{thm:farkas} in the Appendix.
 \end{proof}
  Let us remark that by defining $\psi_i =\nex^i \Gamma$, one gets $  \psi = \wedge_{i=0}^{N} \psi_i$.  Therefore, it follows from (iii) in Proposition~\ref{prop:struct} that  $g_{\psi}(x) = \min_i g_{\psi_i}(x)$. Hence, one can state the previous result in terms of reachability with exact time, with $g_{\psi_i}$ computed previously in Theorem~\ref{thm:lin_reachN}. 



\medskip
\subsection{Finite-Horizon Reachability}
In this section, we provide two approaches to compute the resilience metric for linear systems and finite-horizon reachability specifications. The first approach based on the exact time reachability approach proposed in Section~\ref{sec:linear_reach}, and a second approach based on scenario optimization.

\subsubsection{Computation using LMIs} Consider the linear system $\Sigma$ in (\ref{eq:linear}) with finite-horizon reachability specification $\psi = \lozenge^N \Gamma$ for some set $\Gamma$ as a polytope $\Gamma := \{x\in X\,|\, Gx\le H\}$. Then, we have  
\begin{align}
\label{eqn:opt_reach2}
g_\psi(x) &= \max\,\varepsilon\ge 0,\,s.t.\, \text{ for all }d_0,\ldots,d_{N-1}\in\ball_\varepsilon(0)\nonumber\\ 
&\begin{cases}
Gx\le H\text{ or } \\
G(Ax+d_0)\le H \text{ or } \\
G(A^2x+Ad_0+d_1)\le H \text{ or } \\
\vdots \\
G(A^Nx+A^{N-1}d_0+\ldots+d_{N-1})\le H.
\end{cases}
\end{align}
in view of (v) in Proposition~\ref{prop:struct}, one can select $\psi_i =\nex^i \Gamma$, to get $  \psi = \vee_{i=0}^{N} \psi_i$. Therefore, we can use the results of reachability in an exact time presented in Section~\ref{sec:linear_reach} to obtain a lower bound on the resilience metric given by $g_{\psi}(x) \ge \max_i g_{\psi_i}(x)$. However, using this approach, we do not have any measure of how the computed resilience metric is conservative. To deal with this issue, in the next section, we propose a scenario optimization approach to compute a lower bound on the resilience metric, together with a measure of its conservatism.

\medskip

\subsubsection{Computation using Scenario Optimization}
\label{sec:scenario}

To get a solution of the optimization problem in (\ref{eqn:opt_reach2}). We define $\mu=\frac{1}{\varepsilon}$, and the set $\mathcal{Z}=\{(z_0,z_1,\ldots,z_{N-1}) \in\ball_1(0)\}$. Hence, one gets that $g_{\psi}(x)$ is given by the following optimization problem:
\begin{align*}
g_\psi(x) = \min\,&\mu\ge 0,\,s.t.\, \text{ for all }z_0,\ldots,z_{N-1}\in\ball_1(0) \nonumber\\
&\begin{cases}
G\mu x\le \mu H\text{ or } \nonumber\\
G(\mu Ax+z_0)\le \mu H \text{ or } \nonumber\\
G(\mu A^2x+Az_0+z_1)\le \mu H \text{ or } \nonumber\\
\vdots \nonumber\\
G(\mu A^Nx+A^{N-1}z_0+\ldots+z_{N-1})\le \mu H.
\end{cases}
\end{align*}
If there exists a known $\overline{\mu} \geq 0$ such that the problem is feasible for $\overline{\mu}$\footnote{In this problem, $\bar{\mu}$ can be chosen as $\bar{\mu}=\frac{1}{\varepsilon}$ for any $\varepsilon>0$ that is solution to the optimization problem \eqref{eqn:opt_reach2}.}, the computation of $g_{\psi}(x)$ can be written in terms of the following non-convex program
\begin{align*}
    &\min \mu\\
    &\text{ s.t }~ \mu \in T \text{ and } \bigvee\limits_{k=1}^{N+1} g_k(\mu,z) \leq 0 \text{ for all } z \in \mathcal{Z},
\end{align*}
where $T=[0,\overline{\mu}]$ and for $\mu \geq 0$ and $z=(z_0,z_1,\ldots,z_{N-1}) \in \mathcal{Z}$, the map $g_k$, is defined by $g_1(\mu,z)= G\mu x- \mu H$ and for $k\in \{2,\ldots,N+1\}$
\begin{equation}
    g_k(\mu,z)= G(\mu A^{k-1}x+A^{k-2}z_0+\ldots+z_{k-2}) - \mu H
\end{equation}
The symbol $\bigvee\limits_{k=1}^{N+1} g_k(\mu,z) \leq 0$ indicates that there exists $k \in \{1,2,\ldots,N\}$ such that $g_k(\mu,z) \leq 0$, for all $z \in \mathcal{Z}$.

For a chosen parameter $\gamma \in \mathbb{R}^{N+1}_{\geq 0}$, in order to construct the $\gamma$-scenario program~\cite{esfahani2014performance} associated with the robust program defined above, we consider a uniform distribution on the space $\mathcal{Z}$ and obtain $M$ i.i.d. sample points  $\{z^1,z^2,\ldots,z^M\}$ with $z^i=(z^i_0,z^i_1,\ldots,z^i_{N-1}) \in \mathcal{Z}$, $i \in \{1,2,\ldots,M\}$. In view of~\cite{esfahani2014performance}, the $\gamma$-scenario program can be written as
\begin{align}
    &\min \mu \label{eqn:opt_nonlinSce2}\\
    &\text{ s.t }~ \mu \in T \text{ and }  \bigvee\limits_{k=1}^N  (g_k(\mu,z^i)+\gamma_k \leq 0),~~ i \in \{1,2,\ldots,M\}.\nonumber
\end{align}
We have the following result.
\begin{theorem}
\label{thm:SP}
Consider the linear system $\Sigma$ in (\ref{eq:linear}) and the specification $\psi=\lozenge^N \Gamma$, for some $N \in \mathbb{N}$. For $\beta \in (0,1]$ and $\eta \in [0,1]^{N+1}$, the optimal solution of \eqref{eqn:opt_nonlinSce2}, gives an approximate optimal solution to the optimization problem \eqref{eqn:opt_reach2} with confidence $(1-\beta)$, when the number of samples $M$ satisfies
\begin{equation}
\label{eqn:sample_size}
   M \geq M\left( (\eta_1^{\frac{1}{N}},\eta_2^{\frac{1}{N}},\ldots,\eta_{N+1}^{\frac{1}{N}}),\beta\right) 
\end{equation}
and $\gamma=(\gamma_1,\gamma_2,\ldots,\gamma_{N+1})$ with
$\gamma_k=(\|A\|^{k-1}\|Gx\|+\|H\|)\eta^{\frac{1}{N}}$, $k\in \{1,2,\ldots,N+1\}$, and the map $M:[0,1]^{N+1}\times (0,1] \rightarrow \mathbb{N}$ is defined for $\alpha \in [0,1]^{N+1}$ and $\theta \in (0,1]$ by $M(\alpha,\theta)=\min\{M\in \mathbb{N}\mid \sum_{k=1}^{N+1}(1-\alpha_k)^M \leq \theta \}$. Moreover, the optimal value $J^*$ for the optimization problem \eqref{eqn:opt_reach2}, and $J^*_M$ the optimal value for the optimization problem \eqref{eqn:opt_nonlinSce2} satisfy $J^*-J^*_N \leq \|\eta\|_{\infty}$, with confidence $(1-\beta)$.
\end{theorem}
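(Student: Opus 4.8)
The plan is to reduce the statement to a direct application of the scenario-optimization machinery of~\cite{esfahani2014performance}, after verifying that the robust program for $\psi = \lozenge^N\Gamma$ fits the required ``disjunctive'' template with the stated regularity. First I would isolate the structural ingredients: the robust program is $\min\{\mu \mid \mu\in T,\ \bigvee_{k=1}^{N+1} g_k(\mu,z)\le 0\ \forall z\in\mathcal Z\}$, where each $g_k$ is affine in $\mu$ (hence the objective is the trivial convex function $\mu\mapsto\mu$ on the compact interval $T=[0,\bar\mu]$, whose existence is guaranteed by the footnote), and $\mathcal Z=\ball_1(0)^N$ is a compact set carrying the uniform probability measure. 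This is exactly the setting in which the $\gamma$-scenario program and its a-priori sample bound from~\cite{esfahani2014performance} apply, the disjunction over $k\in\{1,\dots,N+1\}$ being handled by the ``or'' extension of scenario theory.

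Second, I would compute the Lipschitz-type constants that control the inflation parameters $\gamma_k$. For each $k$, $g_k(\mu,z)=G\big(\mu A^{k-1}x + A^{k-2}z_0 + \cdots + z_{k-2}\big) - \mu H$; perturbing $z$ within $\mathcal Z$ changes $g_k$ by at most a quantity bounded in terms of $\|A\|^{j}$ and $\|G\|$, while the $\mu$-dependence contributes $\|A\|^{k-1}\|Gx\| + \|H\|$. The point is that choosing $\gamma_k = (\|A\|^{k-1}\|Gx\| + \|H\|)\,\eta^{1/N}$ makes the $\gamma$-inflated constraint robust against the discretization error of replacing $\mathcal Z$ by $\eta^{1/N}$-dense samples, so that the ``regularized'' robust feasibility is implied by satisfaction at the $M$ sampled points with the stated $M$. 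The sample size $M\ge M\big((\eta_1^{1/N},\dots,\eta_{N+1}^{1/N}),\beta\big)$ with $M(\alpha,\theta)=\min\{M\mid \sum_{k=1}^{N+1}(1-\alpha_k)^M\le\theta\}$ is precisely the chance-constrained covering bound of~\cite[Theorem ...]{esfahani2014performance} applied with the per-disjunct violation levels $\eta_k^{1/N}$ and overall confidence $1-\beta$; I would cite that theorem and instantiate it.

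Third, I would assemble the two conclusions. The confidence statement follows because, with probability at least $1-\beta$ over the sample draw, every $z\in\mathcal Z$ is within $\eta^{1/N}$ (in the relevant sense) of some sample, so a $\gamma$-scenario-feasible $\mu$ is feasible for the original robust program in~\eqref{eqn:opt_reach2}; hence the scenario optimum is an approximate optimum. The value bound $J^* - J^*_M \le \|\eta\|_\infty$ comes from the opposite inclusion: the $\gamma$-inflated feasible set is contained in the true feasible set shrunk by the inflation, so the optimal values differ by at most the worst inflation, which is controlled by $\|\eta\|_\infty$ after the normalization by $1/N$-powers; monotonicity of the objective $\mu\mapsto\mu$ then transfers this to the values. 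I would also note the shift $\mu=1/\varepsilon$ connecting the two formulations so that the bound is stated for $g_\psi$.

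\textbf{Main obstacle.} The delicate part is the bookkeeping that links the continuous disjunctive robust constraint $\bigvee_k g_k(\mu,z)\le 0\ \forall z$ to its sampled, $\gamma$-inflated counterpart with the exact constants claimed: one must be careful that the ``or'' is over a fixed finite index set while the ``for all'' is over $\mathcal Z$, so the scenario guarantee has to be applied disjunct-by-disjunct with violation budgets $\eta_k^{1/N}$ that combine via the union bound hidden in the definition of $M(\cdot,\cdot)$, and the Lipschitz constants $\|A\|^{k-1}\|Gx\|+\|H\|$ must be shown to dominate the true modulus of continuity of $g_k$ in $(\mu,z)$ on $T\times\mathcal Z$. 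Getting these constants to match~\eqref{eqn:sample_size} exactly — rather than up to an unspecified constant — is where the care is needed; everything else is a direct invocation of~\cite{esfahani2014performance}.
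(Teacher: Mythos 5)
Your proposal follows essentially the same route as the paper's proof: verify that each $g_k$ is convex (affine) in $\mu$ with the constant $\|A\|^{k-1}\|Gx\|+\|H\|$, use the uniform distribution on $\mathcal{Z}$ to obtain the $\eta^{1/N}$ scaling (the paper makes this explicit by choosing the level-set function $h(\varepsilon)=(\varepsilon/2)^N$), and then invoke Theorem 4.3 together with Remarks 3.5 and 3.9 of \cite{esfahani2014performance} to handle the disjunction over $k$, the sample bound $M(\cdot,\cdot)$, and the value gap $\|\eta\|_{\infty}$. The one caveat, which your write-up shares with the paper's own proof, is the ambiguity over whether $\|A\|^{k-1}\|Gx\|+\|H\|$ is a Lipschitz modulus in $\mu$ or in $z$ (both texts compute it by perturbing $\mu$ while describing the mechanism as a covering of $\mathcal{Z}$), so this is not a gap relative to the paper.
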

\begin{proof}
We prove the result by using~\cite[Theorem 4.3]{esfahani2014performance}. We first provide a Lipschitz constant $L_{z,k}$ for the map $(\mu,z) \mapsto g_k(\mu,z)$ with respect to $\mu$, $k \in \{1,2,\ldots,N+1\}$. Consider $\mu,\mu' \in T$ and $z \in \mathcal{Z}$. We have:
\begin{align*}
    &\|g_k(\mu,z)-g_k(\mu',z)\| \\&\leq \|G(\mu A^{k-1}x+A^{k-2}z_0+\ldots+z_{k-2}) - \mu H \\&- G(\mu' A^{k-1}x+A^{k-2}z_0+\ldots+z_{k-2}) - \mu' H\| \\&+\|(\mu-\mu')H\| \\
    &\leq \|A\|^{k-1}\|(\mu-\mu')Gx\|+\|(\mu-\mu')H\| \\ &\leq (\|A\|^{k-1}\|Gx\|+\|H\|)|\mu-\mu'|.
\end{align*}
Hence, $L_{z,k}=\|A\|^{k-1}\|Gx\|+\|H\|$ is a Lipschitz constant for the map $g_k$, $k\in \{1,2,\ldots,N+1\}$ with respect to $z$. Moreover, for any $k \in \{1,2,\ldots,N+1\}$, the map$(\mu,z) \mapsto g_k(\mu,z)$ is linear and then convex with respect to $\mu$. Since the distribution on the set $\mathcal{Z}=\{(z_0,z_1,\ldots,z_{N-1}) \in\ball_1(0)\}$ is uniform, we choose $h(\varepsilon)=(\frac{\varepsilon}{2})^N$. The result follows from an application~\cite[Theorem 4.3]{esfahani2014performance},~\cite[Remark 3.9]{esfahani2014performance} and~\cite[Remark 3.5]{esfahani2014performance}.
\end{proof}


\section{Computation of Resilience for Nonlinear Systems}
\label{sec:nonlinear}

In this section, we extend the approaches to compute the resilience metric to the case of nonlinear systems. In particular, we discuss two approaches: (i) linear optimization-based approach and (ii) SMT-based approach.
\subsection{Linear Optimization-based Approach}
Since the computation of resilience for different types of specifications relies on reachability with the exact time as a building block, in this section, we focus on reachability with an exact time specification. The extension to other specifications can be done following the approaches presented in the previous section.

Consider the reachability at a specific time point: $\psi = \nex^N \Gamma$, for some set $\Gamma$ defined as a polytope $\Gamma = \{x\in X\,|\, Gx\le H\}$. We provide a linear optimization-based solution for computing $g_{\psi}(x)$ for nonlinear systems.
we use the following auxiliary result.

\begin{lemma}
\label{prop:lin_abstract}
Consider the nonlinear discrete-time system $\Sigma$ defined by:
\begin{equation}
\label{eqn:nonlinear}
x(k+1)=f(x(k)),~x(k) \in X \subseteq \mathbb{R}^n,~k\in \mathbb{N}.
\end{equation}
with a continuously differential map $f$. Assume without loss of generality that $f(0)=0$ and assume the existence of $\overline{\alpha}_{ij}$, $\underline{\alpha}_{ij}$, $i,j \in \{1,2,\ldots,n\}$, such that for all $x \in X\subset\reals^n$:
\begin{equation}
\label{eqn:jaco}
    \underline{\alpha}_{ij} \leq \frac{\partial f_i}{\partial x_j} \leq \overline{\alpha}_{ij},~ i,j \in \{1,2,\ldots,n\}
\end{equation}
Then, we have $\underline{A}x \leq f(x) \leq \overline{A}x$ for all $x \in X$,
with $\underline{A}, \overline{A} \in \mathbb{R}^{n\times n}$ defined for $i,j \in \{1,2,\ldots,n\}$ by $\underline{A}_{ij}=\underline{\alpha}_{ij}$ and $\overline{A}_{ij}=\overline{\alpha}_{ij}$.
\end{lemma}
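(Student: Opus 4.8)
The plan is to reduce the desired vector inequality to a bound on the \emph{averaged Jacobian} of $f$ along the ray joining the origin to $x$. Fix $x \in X$ and assume, as is implicit in the statement, that the segment $\{tx \mid t \in [0,1]\}$ is contained in $X$ (e.g.\ $X$ is star-shaped with respect to $0$), so that $t \mapsto f(tx)$ is well defined and $C^1$ on $[0,1]$. First I would apply the fundamental theorem of calculus componentwise and use $f(0)=0$ to write
\[
f_i(x) = f_i(0) + \int_0^1 \frac{d}{dt}\, f_i(tx)\, dt = \int_0^1 \sum_{j=1}^n \frac{\partial f_i}{\partial x_j}(tx)\, x_j\, dt = \sum_{j=1}^n a_{ij}(x)\, x_j , \qquad a_{ij}(x) := \int_0^1 \frac{\partial f_i}{\partial x_j}(tx)\, dt .
\]
In matrix form this is the identity $f(x) = A(x)\, x$ with $A(x) = [a_{ij}(x)]_{i,j} \in \mathbb{R}^{n\times n}$.

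The next step uses the pointwise Jacobian bounds \eqref{eqn:jaco}. Since $tx \in X$ for every $t \in [0,1]$, we have $\underline{\alpha}_{ij} \le \frac{\partial f_i}{\partial x_j}(tx) \le \overline{\alpha}_{ij}$ for all $t$, and monotonicity of the integral immediately gives $\underline{\alpha}_{ij} \le a_{ij}(x) \le \overline{\alpha}_{ij}$ for all $i,j$, i.e.\ $\underline{A} \le A(x) \le \overline{A}$ entrywise, with $\underline{A}, \overline{A}$ as defined in the statement. Combining this with $f(x) = A(x) x$ then yields $\underline{A}\, x \le f(x) \le \overline{A}\, x$, which is the claim. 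The hypothesis $f(0)=0$ is genuinely without loss of generality: replacing $f$ by $\tilde f(y) := f(y + x^\ast) - x^\ast$ for an equilibrium $x^\ast$ (and shifting $X$ and $\Gamma$ accordingly) does not change the Jacobian, so the bounds \eqref{eqn:jaco} are preserved.

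The delicate point — and the step I would be most careful about — is the last implication, passing from the entrywise matrix bound $\underline{A} \le A(x) \le \overline{A}$ to the vector inequality: multiplying the scalar bound $\underline{\alpha}_{ij} \le a_{ij}(x) \le \overline{\alpha}_{ij}$ by $x_j$ only preserves the ordering termwise when the sign of each $x_j$ is known. In the setting where this lemma is used, $X$ is taken within the nonnegative orthant (which one can arrange by the equilibrium shift above together with restriction to the region of interest), so $x \ge 0$ and the inequalities go through coordinatewise; I would make this standing assumption explicit. Thus the whole argument is short once the averaged-Jacobian identity $f(x) = A(x)x$ is in place, and the main obstacle is purely the bookkeeping on the sign of $x$ (and the star-shapedness of $X$ needed to justify integrating along $[0,x]$).
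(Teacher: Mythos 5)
Your proposal takes essentially the same route as the paper's own proof: the paper also integrates the Jacobian bounds to obtain the incremental estimate $\sum_{j} \underline{\alpha}_{ij}(x_j - y_j) \le f_i(x) - f_i(y) \le \sum_{j}\overline{\alpha}_{ij}(x_j - y_j)$ and then specializes to $y=0$, which is exactly your averaged-Jacobian identity $f(x) = A(x)\,x$ with $\underline{A} \le A(x) \le \overline{A}$ applied along the ray through the origin. The one substantive difference is that you make explicit the two hypotheses needed for the argument to close — star-shapedness of $X$ with respect to $0$ so that the segment $\{tx \mid t\in[0,1]\}$ stays inside $X$, and a sign condition $x \ge 0$ (more generally, control of the sign of each $x_j - y_j$) so that multiplying the entrywise bounds $\underline{\alpha}_{ij} \le a_{ij}(x) \le \overline{\alpha}_{ij}$ by $x_j$ preserves the ordering — whereas the paper asserts the incremental inequality for all $x,y\in X$ without comment, and that assertion is false in general without the sign restriction (take $n=1$, $f(x)=\alpha x$ with $\underline{\alpha}<\alpha<\overline{\alpha}$ and $x<0$). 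So your proof is correct and, if anything, more careful than the printed one; the caveats you flag are genuine and would need to be added as standing assumptions for the lemma to hold as stated.
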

\begin{proof}
Condition \eqref{eqn:jaco} implies that for all $x,y \in X$ and for all $i,j \in \{1,2,\ldots,n\}$, we have $$\sum\limits_{j=1}^n \underline{\alpha}_{ij}(x_j-y_j) \leq  f_i(x)-f_i(y) \leq \sum\limits_{j=1}^n\overline{\alpha}_{ij}(x_j-y_j).$$
In particular for $y=0$, one gets that for all $x \in X$ and for all $i\in \{1,2,\ldots,n\}$, we have $$\sum\limits_{j=1}^n \underline{\alpha}_{ij}x_j \leq  f_i(x) \leq \sum\limits_{j=1}^n\overline{\alpha}_{ij}x_j.$$
and the result holds.
\end{proof}

\medskip

\begin{theorem}
\label{thm:nonlin_lin}
Consider the nonlinear system $\Sigma$ given by
\begin{equation}
    x(k+1)=f(x(k))+d(k),~x(k) \in X \subseteq \mathbb{R}^n,~k\in \mathbb{N}
\end{equation}
and the specification $\psi = \nex^N \Gamma$, where $\Gamma$ is the polytope $\Gamma = \{x\in X\,|\, Gx\le H\}$. Assume the existence of $\overline{\alpha}_{i,j}$, $\underline{\alpha}_{ij}$, $i,j \in \{1,2,\ldots,n\}$, such that for all $x \in X$:
\begin{equation}
\label{eqn:jaco2}
    \underline{\alpha}_{ij} \leq \frac{\partial f_i}{\partial x_j} \leq \overline{\alpha}_{ij},~ i,j \in \{1,2,\ldots,n\},
\end{equation} 
Consider the matrices $B_1, B_{2}, \ldots, B_N \in \mathbb{R}^{n \times n}$ defined as
\begin{equation*}
    B_{k,ij}=\max\left(\sum_{h=1}^n G_{ih}\underline{D}_{k,hj},\sum_{h=1}^c G_{ih}\overline{D}_{k,hj}\right),
\end{equation*}
 where $B_{k,{ij}}$ represents the coefficient corresponding to the position $(i,j)$ of the matrix $B_k$, $k\in \{1,2,\ldots,N\}$, and the matrices $\underline{D}_k, \overline{D}_k$ are given with $\underline{D}_k= \underline{A}^k$ and $\overline{D}_k=\overline{A}^k$, with $\underline{A}, \overline{A}$ defined for $i,j \in \{1,2,\ldots,n\}$ by $\underline{A}_{ij}=\underline{\alpha}_{ij}$ and $\overline{A}_{ij}=\overline{\alpha}_{ij}$. Then, we have
\begin{align}
\label{eqn:linprog} \nonumber
   g_{\psi}(x) \geq \max &\{\varepsilon\ge 0\,|\, B_Nx+B_{N-1}d_0+\ldots+d_{N-1} \leq H \\  &  \text{ for all }d(0),\ldots,d(N-1)\in\ball_\varepsilon(0)\},
\end{align}
\end{theorem}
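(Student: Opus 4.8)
The plan is to reduce the nonlinear reachability problem to the linear one that is already handled by Theorem~\ref{thm:lin_reachN}, using the componentwise bounds on $f$ supplied by Lemma~\ref{prop:lin_abstract}. First I would unroll the dynamics: starting from $x(0)=x$ and applying $x(k+1)=f(x(k))+d(k)$ repeatedly, I write $x(N)$ as a nested expression in $f$ and the disturbances $d(0),\ldots,d(N-1)$. The specification $\psi=\nex^N\Gamma$ holds for $\xi(x,\varepsilon)$ iff $G x(N)\le H$ for every admissible choice of $d(0),\ldots,d(N-1)\in\ball_\varepsilon(0)$; hence $g_\psi(x)$ is the supremum of those $\varepsilon$ for which this holds (and $0$ if it already fails at $\varepsilon=0$).

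The core step is to over-approximate the reachable state $x(N)$ by an affine expression in $x$ and the disturbances. Applying Lemma~\ref{prop:lin_abstract}, $\underline A x' \le f(x') \le \overline A x'$ for all $x'\in X$. Iterating this bound through the $N$ compositions — and being careful about sign, since multiplying a two-sided bound by a matrix with mixed-sign entries does not preserve the inequality direction — I would show by induction on $k$ that the contribution of the initial state to $x(k)$ lies between $\underline D_k x$ and $\overline D_k x$ with $\underline D_k=\underline A^k$, $\overline D_k=\overline A^k$, and that the contribution of each disturbance $d(j)$ is similarly sandwiched between $\underline A^{k-1-j} d(j)$ and $\overline A^{k-1-j} d(j)$ (interpreted componentwise, and assuming the disturbances are nonnegative or handling the general case by splitting into positive and negative parts). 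Then $G x(N)$ is bounded above by $B_N x + B_{N-1} d(0) + \ldots + B_1 d(N-2) + d(N-1)$ componentwise, where each $B_{k,ij}$ is chosen as the larger of $\sum_h G_{ih}\underline D_{k,hj}$ and $\sum_h G_{ih}\overline D_{k,hj}$ precisely so that $(GD_k)_{ij}\le B_{k,ij}$ no matter which of $\underline D_k$, $\overline D_k$ (or a componentwise interpolation) the true $D_k$ equals. Consequently, whenever the affine inequality $B_N x + B_{N-1}d_0+\ldots+d_{N-1}\le H$ holds for all $d_j\in\ball_\varepsilon(0)$, we also have $G x(N)\le H$, i.e., $\xi(x,\varepsilon)\vDash\psi$; taking the supremum over such $\varepsilon$ yields the stated lower bound $g_\psi(x)\ge\max\{\varepsilon\ge 0\mid B_N x+B_{N-1}d_0+\ldots+d_{N-1}\le H \text{ for all } d_j\in\ball_\varepsilon(0)\}$.

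I expect the main obstacle to be the bookkeeping around sign and direction of the componentwise inequalities when propagating the two-sided envelope $\underline A x'\le f(x')\le\overline A x'$ through $N$ matrix multiplications: the product of an interval matrix with a vector of uncertain (possibly sign-indefinite) disturbances does not admit a clean monotone bound, which is exactly why the matrices $B_k$ are defined via a coefficientwise $\max$ rather than simply as $G\underline A^k$ or $G\overline A^k$. The cleanest way to manage this is to establish, by induction on $N$, the set inclusion
\[
\{x(N)\mid x(0)=x,\ x(k+1)=f(x(k))+d(k),\ d(k)\in\ball_\varepsilon(0)\}\ \subseteq\ \bigl\{\text{states } y \text{ with } Gy \le B_N x + B_{N-1}d_0+\ldots+d_{N-1} \text{ for some } d_j\in\ball_\varepsilon(0)\bigr\},
\]
which isolates the sign-handling into the inductive step and makes the final implication immediate. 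Once this inclusion is in hand, the remainder of the argument is routine: the right-hand side of \eqref{eqn:linprog} is by construction a lower bound on $\sup\{\varepsilon\ge 0\mid \xi(x,\varepsilon)\vDash\psi\}=g_\psi(x)$, and the $\varepsilon=0$ case is consistent since $B_N x\le H$ is implied by $G f^{(N)}(x)\le H$ on the nominal trajectory.
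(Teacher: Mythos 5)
Your proposal follows essentially the same route as the paper's proof: unroll the dynamics, sandwich the nested composition $f(f(\cdots (f(x)+d(0))\cdots)+d(N-1))$ between $\underline A^N x + \underline A^{N-1}d_0+\cdots+d_{N-1}$ and $\overline A^N x + \overline A^{N-1}d_0+\cdots+d_{N-1}$ via Lemma~\ref{prop:lin_abstract}, absorb this two-sided envelope into the single componentwise upper bound $B_N x + B_{N-1}d_0+\cdots+d_{N-1}$ on $Gx(N)$ through the coefficientwise $\max$ defining $B_k$, and read off the resulting robust linear program as a lower bound on $g_\psi(x)$. You are in fact more explicit than the paper about the one delicate step --- that propagating a componentwise two-sided bound through $N$ compositions, and through matrices $G$, $\underline A$, $\overline A$ with possibly mixed-sign entries, requires sign bookkeeping --- a point the paper's own proof asserts without comment.
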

\begin{proof} 
The computation of resilience metric $g_{\psi}$ for reachability at a specific time point for the nonlinear system $\Sigma$ requires solving the following optimization problem
\begin{align}
\label{eqn:prob1}
&\max \{\varepsilon\ge 0\,|\, \nonumber \\ &Gf(f(\ldots((f(x)+d(0))+d(1)),\ldots)+d(N-1)) \leq H , \nonumber \\ &  \text{ for all }d(0),\ldots,d(N-1)\in\ball_\varepsilon(0)\}. 
\end{align}
In view of Lemma~\ref{prop:lin_abstract}, the nonlinear term $f(f(\ldots(f(x)+d(0))+d(1)),\ldots)+d(N-1))$ can be abstracted into a linear one as follows:
\begin{align*}
    &\underline{A}^Nx+\underline{A}^{N-1}d_0+\ldots+d_{N-1} \leq\\& f(f(\ldots((f(x)+d(0))+d(1)),\ldots)+d(N-1)) \leq\\& \overline{A}^Nx+\overline{A}^{N-1}d_0+\ldots+d_{N-1}.
\end{align*}
It follows that 
\begin{align*}
    Gf(f(\ldots&(f(x)+d(0))+d(1)),\ldots)+d(N-1)) \leq\\& B_Nx+B_{N-1}d_0+\ldots+d_{N-1},
\end{align*}
where $B \in \mathbb{R}^{n\times n}$ is defined for $i,j \in \{1,2,\ldots,n\}$ by $$B_{k,ij}=\max(\sum\limits_{h=1}^n G_{ih}\underline{A}^k_{hj},\sum\limits_{h=1}^c G_{ih}\overline{A}^k_{hj}),$$

where $B_{k,{ij}}$ represents the coefficient corresponding to the position $(i,j)$ of the matrix $B_k$, $k\in \{1,2,\ldots,N\}$. Hence, the optimization problem in \eqref{eqn:prob1} can be relaxed into the following linear optimization problem.
\begin{align}
\label{eqn:lin_abst}
    \max \{\varepsilon\ge 0\,|\,& B_Nx+B_{N-1}d_0+\ldots+d_{N-1} \leq H \nonumber \\&  \text{ for all }d(0),\ldots,d(N-1)\in\ball_\varepsilon(0)\},
\end{align}
which ends the proof.
\end{proof}

Theorem~\ref{thm:nonlin_lin} shows how to transform the problem of computing the resilience metric for nonlinear systems and exact-time reachability, into a problem for a linear system that can be resolved using the approach proposed in Section~\ref{sec:linear_reach}.

\subsection{SMT-based Approach}\label{Sec_SMT}
This section provides an SMT-based solution to compute resilience for nonlinear systems without any linear approximation. In addition, the approach also relaxes polytopic assumptions over subsets of state space corresponding to atomic propositions. To utilize an SMT-based solution, we consider the complement of the resilience definition \eqref{eq:quan} as given below. For a given LTL$_F$ formula $\psi$, the resilience in \eqref{eq:quan} can be equivalently written as
\begin{equation}\label{eq:resi}
\hspace{-0.2em} g_\psi(x) = \begin{cases}
\inf\left\{\varepsilon\ge 0\,|\,\xi(x,\varepsilon)\nvDash \psi\right\}, & \text{ if } \xi(x,0)\nvDash\psi\\
0, & \text{ if } \xi(x,0)\vDash\psi,
\end{cases}
\end{equation}
where $\neg\psi$ is a negation of the formula $\psi$.\\
Consider the set $\Gamma$ represented by logical conjunctions and disjunctions of $P$ nonlinear predicate functions $p_i:X\rightarrow\mathbb{R}$ as given below:
\begin{align}
    \Gamma=\{x\in X\mid p_i(x)\leq0 \ \Theta_i \  p_{i+1}(x)\leq0 \text{ is \textsf{true}},\nonumber\\ \forall i\in\{1,\ldots,P\}, \Theta_i\in\{\vee,\wedge\}\}.\label{new_set}
\end{align}
Now, consider LTL$_F$ formula $\psi=\nex^N\Gamma$, one can compute the resilience $g_\psi(x)$ in \eqref{eq:resi} by solving the following optimization problem:\\

$g_\psi(x) = \min\,\varepsilon\ge 0,\,s.t.\, \text{ there exists }d_0,\ldots,d_{N-1}\in\ball_{\varepsilon}(0) \text{ satisfying } \neg\big(p_i(\Psi)\leq 0 \Theta_i p_{i+1}(\Psi)\leq 0\big) \text{ is \textsf{true}}, \forall i\in\{1,\ldots,P\},\quad \quad \Theta_i\in\{\vee,\wedge\}$,
where $\Psi=f(f(\ldots((f(x)+d(0))+d(1)),\ldots)+d(N-1))$.\\

For a given value of $\varepsilon$, one can leverage SMT solvers such as dReal \cite{gao2013dreal} to check the feasibility of the condition. Then, the near minimum value of $\varepsilon$ can be found using the bisection method. In a similar way, one can formulate optimization problems for other formulas $\psi=\lozenge^N \Gamma$ and $\psi = \square^N \Gamma$.


\section{Case Studies}
\label{sec:case_studies}
\noindent\textbf{Temperature Regulation.}
We consider the problem of regulating the temperature in a circular building of $9$ rooms. The dynamics of the room temperatures are given by
\begin{align*}
{T}_i(k+1)& =T_i(k)+\alpha(T_{i+1}(k)+T_{i-1}(k)-2T_i(k))\\
&+\beta(T_e+\delta T_e-T_i(k)), \qquad i\in \{1,2,\ldots,9\},
\end{align*}
where $T_{i+1}$ and $T_{i-1}$ are the temperatures of the neighbor rooms (here $T_0=T_9$ and $T_{9+1}=T_1$), $T_e=0^\circ C$ is the outside temperature, considered as a disturbance and $\alpha$ and $\beta$ are the conduction factors.
The numerical parameters are taken from~\cite{saoud2018contract} and given by $\alpha=0.45$ and $\beta=0.045$.

We consider two scenarios for this example. In the first scenario, the desired behavior of the system is as follows: The temperatures of the $9$ rooms initiated in the set $X_{0}=[24,25]^9$ should reach the target set $X_T=[21,22]^9$ exactly at $N=3$ steps while remaining in the safe set $X_S=[20.5,25]^9$. This behavior can be described by the LTL$_F$ formula:
\begin{equation}
    \psi=\psi_1 \wedge \psi_2, \text{ with } \psi_1=\square^3X_S \text{ and } \psi_2=\nex^3X_T.
\end{equation}
The objective is to compute the range of admissible external disturbances $\delta T_e$ under which any trajectory of the system initiated in the set $X_{0}$ satisfies $\psi$. Since the system is linear and the set of initial states $X_0$ is convex, we rely on Theorems~\ref{thm:lin_reachN},~\ref{thm:fin_safe} and~\ref{thm:conv} to compute the resilience metric. The numerical implementations show that the resilience metric is given by the set $\Delta_{T_e}=[-2.23, 2.23] ^\circ C$.

In the second scenario, the desired behavior of the system is as follows: the temperatures of the $9$ rooms initiated in the set $X_{0}=[50,51]^9$ should remain in the safe set $X_S=[10,51]^9$, reach the target set $X_{T_1}=[41,43]^9$ exactly at $N=4$ steps. Moreover, once the set $X_{T_2}=[31,35]^9$ is reached, the temprature should reach the target set $X_{T_3}=[14,28]^9$ exactly at $N=11$ steps. This behavior can be described by the LTL$_F$ formula: $ \psi=\psi_1 \wedge \psi_2 \wedge \psi_3$, with
\begin{equation}
\psi_1=\square^2X_S, \psi_2=\nex^4X_{T_1} \text{ and } \psi_3=X_{T_2} \implies \nex^{11} X_{T_3}.
\end{equation}
The objective is to compute the range of admissible external disturbances $\delta T_e$ under which any trajectory of the system initiated in the set $X_{0}$ satisfies $\psi$. The numerical implementations show that the resilience metric is given by the set $\Delta_{T_e}=[-3.12, 3.12] ^\circ C$.

Figure~\ref{fig:temp_reg1} (top) shows the nominal trajectories (with $\delta T_e=0$) of the system for the $9$ rooms. Figure~\ref{fig:temp_reg1} (bottom) shows the trajectories of the system for the $9$ rooms with a disturbance $\delta T_e$ randomly chosen in the set $\Delta_{T_e}=[-3.12, 3.12] ^\circ C$. 
\begin{figure}[!t]
	\begin{center}
		\includegraphics[scale=0.49]{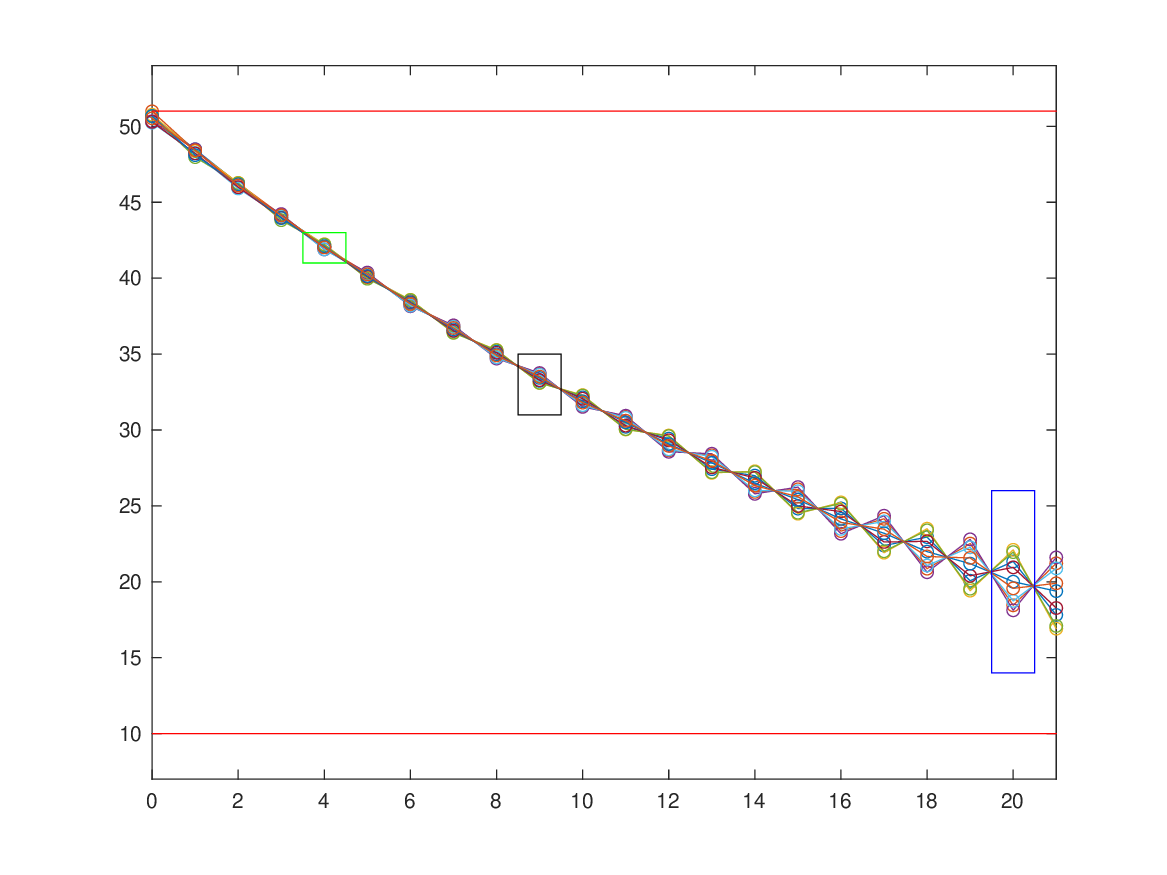}
        \includegraphics[scale=0.49]{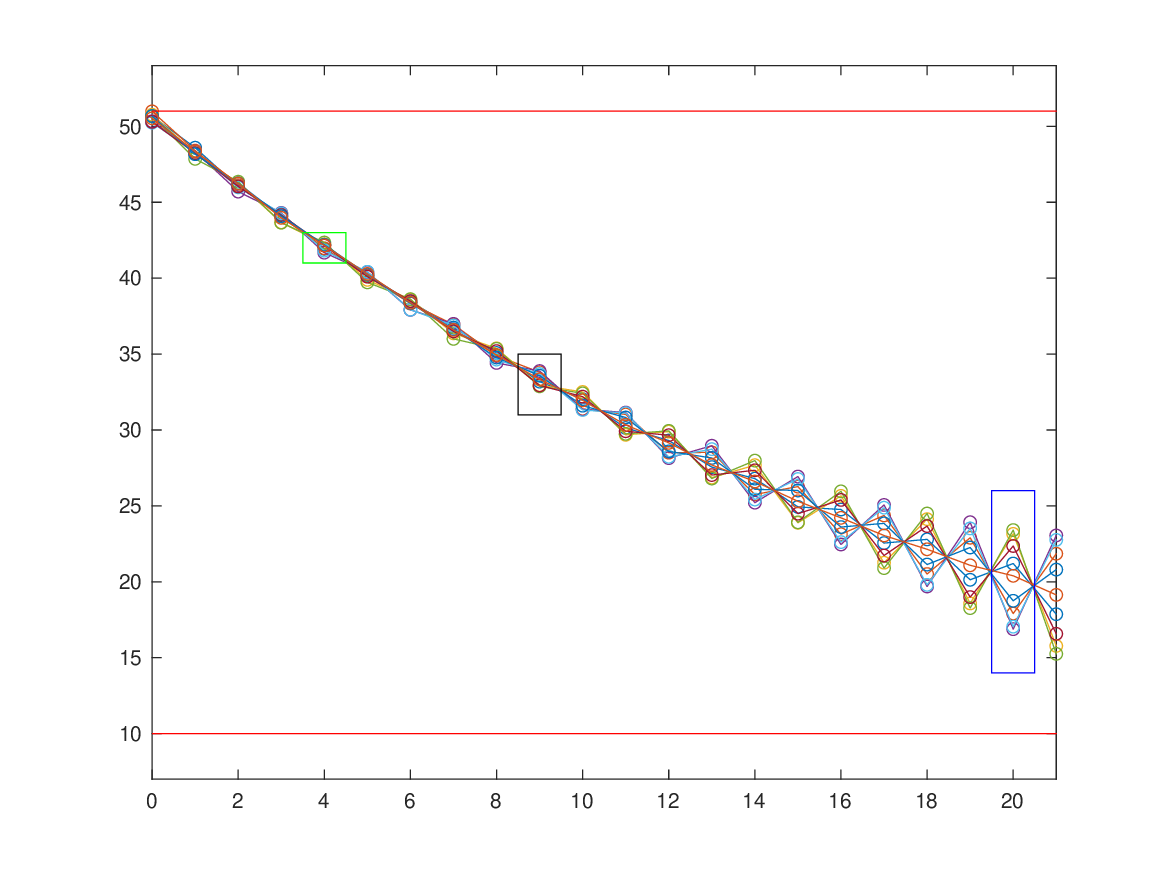}
	\end{center}
	\vspace{-2em}
	\caption{Evolution of the temperatures in the nine rooms with a disturbance $\delta T_e=0$ (top) and with a disturbance $\delta T_e$ randomly chosen in the set $\Delta_{T_e}=[-3.12, 3.12] ^\circ C$ (bottom). The red boundaries represent the safe set $X_S$. The green region represents the set $X_{T_1}$. The black region represents the set $X_{T_2}$ and the blue region represents the set $X_{T_3}$.}
	\label{fig:temp_reg1}
 \vspace{-1em}
\end{figure}


\smallskip
\noindent\textbf{Adaptive Cruise Control.}
Consider a vehicle moving along a straight road. The dynamics of the vehicle is adapted from~\cite{saoud2018contract} and given by the following difference equation:
\begin{equation}
\label{eqn:model}
v(k+1)=v(k)+\frac{\tau}{m}(f_{0}-f_1v(k)-f_{2}v(k)^2),
\end{equation}
where $m>0$ is the mass of the vehicle, $v \geq 0$ represents the velocity of the vehicle and the term $f_0-f_1-f_2v^2$ includes the rolling resistance and aerodynamics and $\tau$ represents a sampling period. Moreover, we include a lead vehicle whose velocity $v_0\geq 0$ is constant. The dynamics of the system are
\begin{equation*}
\label{eqn4}
\left\{
\begin{array}{r c l}
h(k+1) &=& h(k)+\tau(v_0+\delta v_0 - v(k))\\
v(k+1) &=& v(k)+\frac{\tau}{m}(f_{0}+\delta f_0-f_1v(k)-f_{2}v(k)^2),
\end{array}
\right.
\end{equation*}
where $\delta v_0$ is the uncertainty on the velocity $v_0$ of the lead vehicle and $\delta f_0$ is the uncertainty on the parameter $f_0$. The desired behavior can be described by the following LTL$_F$ formula:
    $\psi:=\psi_1 \wedge \psi_2$
    with $\psi_1:=\square^{4} X_S$ and $\psi_2:= \nex^4 X_{T_1}$. This behavior can be interpreted as follows: the relative position should remain in the safe set $X_S=[h_{S,\min},h_{S,\max}]\times[v_{S,\min},v_{S,\max}]$ and the relative position and velocity should reach the set $X_{T_1}=[h_{T_1,\min},h_{T_1,\max}]\times[v_{T_1,\min},v_{T_1,\max}]$ in $4$ steps.

The objective is to compute the resilience metric under which the trajectory of the system initiated from $x_{0}=(60,15)$ satisfies $\psi$. The numerical values of the vehicle parameters and the considered specifications are given in Table \ref{table:parameters}.

We use the approaches developed in Section~\ref{sec:nonlinear} to deal with this nonlinear system. First, we used linear optimization based approach. One can easily check that the values of the parameters $\alpha_{i,j}$, $i,j \in \{1,2\}$, given by $\underline{\alpha}_{11}=\overline{\alpha}_{11}=1,$ $\underline{\alpha}_{12}=\overline{\alpha}_{12}=-1$, $\underline{\alpha}_{21}=\overline{\alpha}_{21}=0$, $\underline{\alpha}_{22}=1-\frac{\tau}{m}-2f_2v_{S,\max}$ and $\overline{\alpha}_{22}=1-\frac{\tau}{m}-2f_2v_{S,\min}$ satisfy the inequalities in \eqref{eqn:jaco2}, where the bounds on $\underline{\alpha}_{22}$ and $\overline{\alpha}_{22}$ follows from the fact that we are interested in dealing with velocities of the following vehicle within the interval $[v_{S,\min},v_{S,\max}]$. Hence, in view of Theorem~\ref{thm:nonlin_lin}, one can use the linear program in \eqref{eqn:linprog} to compute an approximation of the resilience metric $g_{\psi}$. The numerical implementations show that the value of the resilience metric is given by $g_{\psi}=0.0042$. Next, we computed resilience $g_\psi$ using SMT solver dReal and a bisection approach as discussed in Section \ref{Sec_SMT}. The obtained value of $g_\psi$ using SMT based approach is $0.0121$ which is less conservative compared to that of using linearization based approach. The corresponding values for the admissible resilience metric on the variable $\delta v_0$ and $\delta f_0$ are given by, $\Delta v_0=[-0.012,0.012]$ and $\Delta f_0=[-16.6,16.6]$.


\subsection{DC Motor}
Our third case study is a $2$-dimensional discrete-time DC motor adapted from~\cite{adewuyi2013dc} as follows:
\begin{align*}
x_1(k+1) &= x_1(k) + \tau\big (\frac{-R}{L}x_1(k) - \frac{k_{dc}}{L} x_2(k)\big ),\\
x_2(k+1) &= x_2(k) + \tau\big (\frac{-k_{dc}}{J}x_1(k) - \frac{b}{J} x_2(k)+d(k)\big ),
\end{align*}
where $x_1, x_2, R = 1, L = 0.5$, and $J = 0.01$ are the armature current, the rotational speed of the shaft, the electric resistance, the electric inductance, and the moment of inertia of the rotor, respectively. In addition, $\tau = 0.01, b = 0.1,$ and $K_{dc} = 0.01$ represent both the motor torque and the back electromotive force.
The desired behavior of the system is as follows: the armature current $(x_1)$ and the rotational speed of the shaft $(x_2)$ starting from initial state $x_0=(0.4,0.5)$ should reach the target set $X_T=[0.1,0.35] \times [0.1,0.35]$ within $3$ steps. The corresponding LTL$_F$ formula is $\psi=\lozenge^3 X_T$. The objective is to compute the maximal value of the disturbance $d$ under which any trajectory of the system satisfies the specification $\psi$. 

We use the Scenario approach proposed in Section~\ref{sec:scenario}, with $\eta=(0.01,0.01,0.01,0.01)$ and $\beta=0.01$. To achieve the chosen confidence and accuracy, we choose a sampling size $M$ according to Theorem~\ref{thm:SP}. The numerical implementations show that, with confidence $(1-\beta)$, the resilience metric is given by $g_{\psi}(x_0)=0.0485$. 

Figure~\ref{fig:motor} shows examples of $1000$ trajectories under disturbance trajectories chosen randomly on the interval $[-g_{\psi}(x_0),g_{\psi}(x_0)]^2$. In green, we represent the boundaries of the target set $X_T$. One can readily see that almost all the trajectories satisfy the specification $\psi$. Since the obtained value of the resilience metric $g_{\psi}(x_0)$ is valid with confidence $(1-\beta)$, one can also see that some trajectories fail to satisfy the specification $\psi$.

\begin{figure}[!t]
\centering
\includegraphics[scale=0.48]{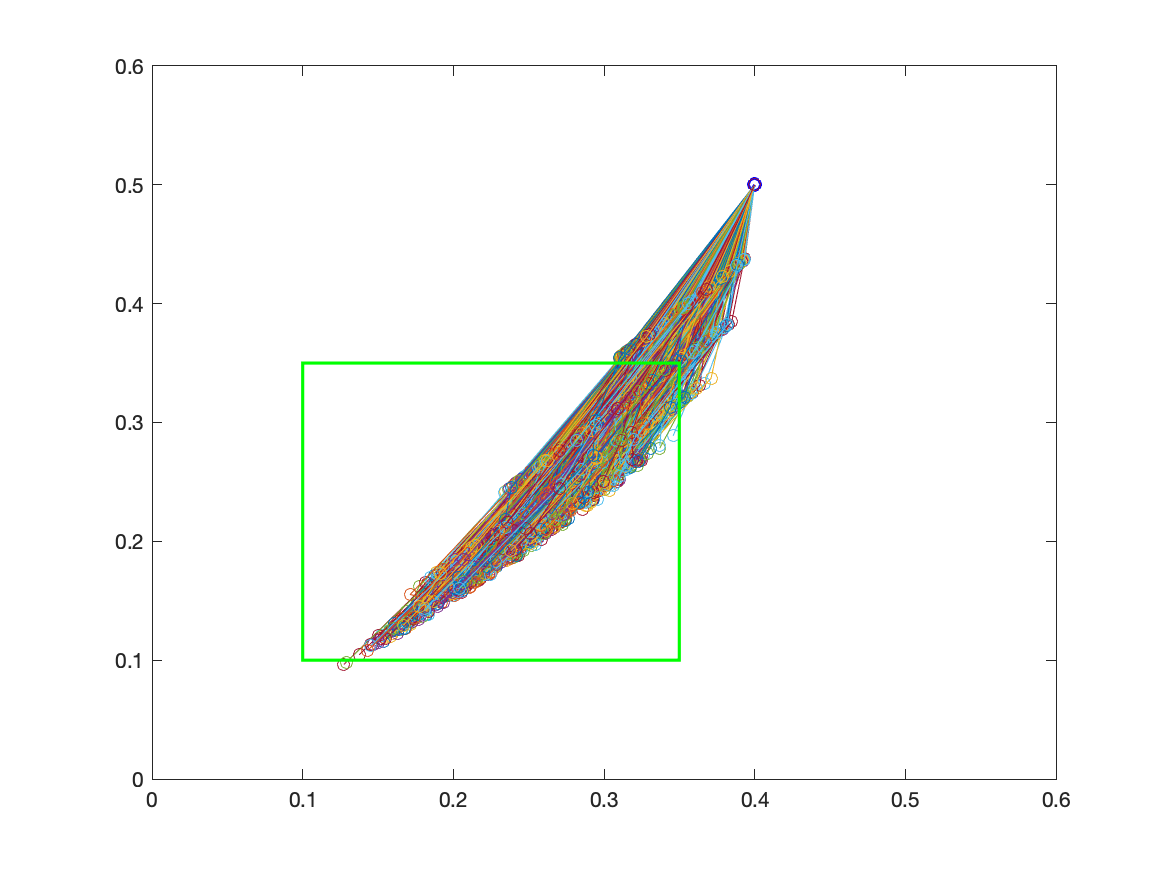}
        \vspace{-0.5cm}
	\caption{Evolution of $1000$ perturbed trajectory starting from the initial condition $x_0=(0.4,0.5)$. The green boundaries represent the target set $X_T$.}
 \label{fig:motor}	
 \vspace{-3ex}
\end{figure}

\section{Conclusions and Discussion}
\label{sec:concl}
We provided a new resilience metric for cyber-physical systems that integrates the dynamical evolution of the system with temporal logic requirements. We showed how this resilience metric can be computed for linear and nonlinear discrete-time models of the system and instances of linear temporal logic specifications. The related computations utilize Farkas' Lemma, linear programs, scenario optimization and SMT-based approaches. A more detailed analysis is provided for the new introduced classes of closed and convex finite linear temporal logic specifications. Illustrating our findings, we showcase computational demonstrations spanning temperature regulation in buildings, adaptive cruise control, and DC motors.

In the future, we plan to develop techniques for enhancing resilience (i.e., maximizing resilience over decision variables) and extending the ideas to continuous-time dynamical systems. We also plan to use the proposed resilience metric for the design of contracts for large-scale networked systems \cite{saoud2018composition} by specifying the largest class of disturbances that a subsystem in the network can tolerate.

Our new notion of resilience can be used by other research communities working on the safety and security of critical infrastructures and safety-critical systems. For instance, it will be useful for designing minimally-invasive load-altering attacks \cite{katewa2021optimal} and providing a vulnerability map with respect to logical requirements on the behavior of the power network \cite{wooding2020formal}. 
%

\begin{table}
	\caption{{Vehicle and safety parameters}}
	\centering
	\begin{tabular}{|c|c|c|}
		\hline 
		Parameter & Value & Unit \\ 
		\hline 
		$M$ & $1370$ & $Kg$ \\ 
		
		$f_0$ & $51.0709$ & $N$ \\ 
		
		$f_1$ & $5$ & $Ns/m$ \\
		
		$f_2$ & $0.4161$ &  $Ns^2/{m^2}$\\

		

		
		
		$h_{T_1,\min}$ & $60.2$ &  $m$\\
		
		$h_{T_1,\max}$ & $60.5$ &  $m$\\
		
		$v_{T_1,\min}$ & $14.6$ &  $m/s$\\
		
		$v_{T_1,\max}$ & $14.9$ &  $m/s$\\

		
		
		
		
		$h_{S,\min}$ & $59.7$ &  $m$\\
		
		$h_{S,\max}$ & $60.7$ &  $m$\\
		
		$v_{S,\min}$ & $14.5$ &  $m/s$\\
		
		$v_{S,\max}$ & $15.5$ &  $m/s$\\

		\hline 
	\end{tabular}
	\label{table:parameters}
\end{table}



\section*{APPENDIX}

\subsection{Auxilliary results}
The following theorem is borrowed from the book \cite[Corollary 7.lh]{schrijver1998theory} and is known as the affine form of Farkas' lemma. 

\begin{theorem}[\cite{schrijver1998theory}]
Let the system $Ax\leq b$ of linear inequalities have at least one solution, and suppose that the linear inequality $cx\leq\delta$ holds for each $x$ satisfying $Ax\leq b$. Then for some $\delta'\leq \delta$ the linear inequality $cx\leq \delta'$ is a nonnegative linear combination of the inequalities in the system $Ax\leq b$.
\end{theorem}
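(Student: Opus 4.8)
The plan is to reduce the affine statement to a homogeneous one and then invoke the cone form of Farkas' lemma. First I would homogenize by introducing an auxiliary scalar variable $t\ge 0$ and considering the polyhedral cone
\[
K=\{(x,t)\in\reals^{n+1}\mid Ax-bt\le 0,\ -t\le 0\}.
\]
The key claim is that the inequality $cx-\delta t\le 0$ holds on all of $K$. If $(x,t)\in K$ with $t>0$, this is immediate, since $A(x/t)\le b$, so by hypothesis $c(x/t)\le\delta$, i.e. $cx\le\delta t$. If $t=0$, then $Ax\le 0$; picking any $x_0$ with $Ax_0\le b$ (which exists because the system is assumed consistent), we get $A(x_0+\lambda x)\le b$ for every $\lambda\ge 0$, hence $cx_0+\lambda cx\le\delta$ for all $\lambda\ge 0$, which is possible only if $cx\le 0=\delta t$. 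This establishes the inclusion $K\subseteq\{(x,t)\mid cx-\delta t\le 0\}$.

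Next I would apply the homogeneous Farkas lemma in the form: if a linear inequality $\ell(z)\le 0$ holds for every $z$ in a polyhedral cone $\{z\mid Mz\le 0\}$, then $\ell$ is a nonnegative linear combination of the rows of $M$. Taking $z=(x,t)$, $M$ the coefficient matrix of the system $\{Ax-bt\le 0,\ -t\le 0\}$ in the unknowns $(x,t)$, and $\ell(x,t)=cx-\delta t$, this yields multipliers $y\ge 0$ (one for each inequality of $Ax\le b$) and $\mu\ge 0$ (for the row $-t\le 0$) with $c=y^{T}A$ and $-\delta=-y^{T}b-\mu$. Set $\delta':=y^{T}b=\delta-\mu\le\delta$. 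Then for every $x$ with $Ax\le b$ we have $cx=y^{T}Ax\le y^{T}b=\delta'$, and the inequality $cx\le\delta'$ is literally $y^{T}(Ax)\le y^{T}b$, i.e. a nonnegative linear combination of the inequalities of the system $Ax\le b$. This is the asserted conclusion.

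The one substantive ingredient, and the step I expect to be the real work, is the homogeneous Farkas lemma itself --- that a finitely generated cone is closed and hence separable from external points, equivalently that every valid implication among homogeneous linear inequalities carries a nonnegative-multiplier certificate. I would either quote this as the standard polyhedral-cone duality (as \cite{schrijver1998theory} does) or, for a self-contained treatment, prove it by Fourier--Motzkin elimination: using homogeneity, reduce ``$Mz\le 0\implies\ell(z)\le 0$'' to the infeasibility of the augmented system $\{Mz\le 0,\ \ell(z)\le 1,\ -\ell(z)\le -1\}$, eliminate the variables one at a time while tracking that every derived inequality is a nonnegative combination of the original rows, and read the certificate off from the terminal contradiction. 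Everything else is routine bookkeeping.
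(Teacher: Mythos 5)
Your proposal is correct. Note first that the paper itself gives no proof of this statement: it is quoted verbatim from \cite[Corollary 7.1h]{schrijver1998theory} as a black box, and the only thing the paper proves is the matrix-form corollary (Theorem~\ref{thm:farkas}) derived from it row by row. So your argument is not ``a different route from the paper'' so much as the route the paper deliberately omits. That said, the homogenization argument you give is the standard and correct one: the case split on $t>0$ versus $t=0$ is handled properly (in particular the $t=0$ case genuinely needs the feasibility hypothesis, via the ray $x_0+\lambda x$, and you use it exactly there), the application of the homogeneous Farkas lemma to the augmented matrix yields $c=y^{T}A$ and $\delta=y^{T}b+\mu$ with $y,\mu\ge 0$, and setting $\delta'=y^{T}b\le\delta$ exhibits $cx\le\delta'$ as the nonnegative combination $y^{T}(Ax)\le y^{T}b$ of the given inequalities, which is precisely the asserted conclusion. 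You are also right to flag that the entire mathematical content sits in the homogeneous Farkas lemma; your Fourier--Motzkin sketch for it is sound (the reduction of validity on the cone to infeasibility of $\{Mz\le 0,\ \ell(z)=1\}$ uses homogeneity to rescale any violating point, and the terminal contradiction $\alpha<\beta$ lets you divide by $\beta-\alpha>0$ to recover the multipliers), though in a formal write-up one would either carry out that bookkeeping or simply cite the cone-duality statement as Schrijver does.
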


\begin{theorem}
\label{thm:farkas}
Suppose the set $Ex\leq F$ is not empty. The following two statements are equivalent:
\begin{itemize}
    \item $Ex\leq F$ holds for all $x$ with $Ax\le b$;
    \item There exists a non-negative matrix $P$ such that $P A = E$ and $Pb\le F$.
\end{itemize}
\end{theorem}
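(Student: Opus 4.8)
The plan is to obtain this matrix statement by applying the affine (vector) form of Farkas' lemma recalled just above to each row of the system $Ex\le F$. The implication from the second bullet to the first is immediate, and I would dispatch it first: if $P\ge 0$ satisfies $PA=E$ and $Pb\le F$, then for every $x$ with $Ax\le b$ the nonnegativity of $P$ gives $PAx\le Pb$, whence $Ex=PAx\le Pb\le F$. All of the substance is therefore in the converse.

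For the converse, I would assume $Ex\le F$ holds for every $x$ with $Ax\le b$ and use that the reference polyhedron $\{x\,|\,Ax\le b\}$ is nonempty (the hypothesis of the theorem), which is exactly what is needed to invoke the affine Farkas' lemma. Write $E$ via its rows $e_1^T,\ldots,e_m^T$ and $F$ via its components $F_1,\ldots,F_m$, and fix $i$. Since $Ex\le F$ holds componentwise, the scalar inequality $e_i^Tx\le F_i$ holds for every $x$ with $Ax\le b$, so the affine Farkas' lemma provides a scalar $\delta_i'\le F_i$ such that $e_i^Tx\le\delta_i'$ is a nonnegative linear combination of the inequalities of $Ax\le b$; unpacking this combination yields a nonnegative row vector $p_i^T$ with $p_i^TA=e_i^T$ and $p_i^Tb=\delta_i'\le F_i$. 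Stacking $p_1^T,\ldots,p_m^T$ as the rows of a matrix $P$ then gives $P\ge 0$ entrywise, $PA=E$ (the $i$th row of $PA$ is $p_i^TA=e_i^T$), and $(Pb)_i=p_i^Tb\le F_i$ for every $i$, i.e., $Pb\le F$; this $P$ is the required witness.

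The only step needing some care --- and the closest thing to an obstacle here --- is the bookkeeping in passing from ``$e_i^Tx\le\delta_i'$ is a nonnegative combination of $Ax\le b$ with $\delta_i'\le F_i$'' to the clean existence of a nonnegative $p_i$ with $p_i^TA=e_i^T$ and $p_i^Tb\le F_i$, together with the degenerate row $e_i=0$: there one takes $p_i=0$, and $p_i^Tb=0\le F_i$ holds because $0=e_i^Tx_0\le F_i$ for any $x_0$ with $Ax_0\le b$. The remainder is a routine assembly of the per-row conclusions.
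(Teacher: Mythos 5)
Your proposal is correct and follows essentially the same route as the paper: apply the affine form of Farkas' lemma row by row to $E_i x\le F_i$, obtain nonnegative multiplier vectors with $p_i^TA=e_i^T$ and $p_i^Tb\le F_i$, and stack them into $P$. You are in fact slightly more complete than the paper, which omits the easy converse direction and does not note that the nonemptiness hypothesis needed to invoke the affine Farkas' lemma is that of $\{x\mid Ax\le b\}$.
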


\begin{proof}
We use the above theorem for each row of $Ex\leq F$ indicated by $E_ix\leq F_i$. Then there is nonnegative row vectors $\lambda_i\ge 0$ such that $\lambda_i A = E_i$ and $\lambda_i b \le F_i$.
The non-negative constants in $\lambda_i$ are referred to as the Farkas multipliers. Putting all these multipliers as row vectors in a matrix $P$ we get that $P A = E$ with $Pb\le F$
\end{proof}

\begin{lemma}
\label{lem:continuity}
Consider the discrete-time system $\Sigma$ in \eqref{eqn:sys}. If the map $f:\mathbb{R}^n \rightarrow \mathbb{R}^n$ is continuous, then for any $x \in \mathbb{R}^n$, and for any sequence $\varepsilon_i \geq 0$, $i \in \mathbb{N}$, the map $\xi(x,.):\mathbb{R}_{\geq 0} \rightrightarrows \mathbb{R}^n$, defined in \eqref{eqn:reach} satisfies: 
\begin{equation}
\label{eqn:cond2}
\lim_{i \rightarrow \infty }\xi(x,\varepsilon_i) \subseteq \xi(x,\lim_{i \rightarrow \infty } \varepsilon_i)
\end{equation}
\end{lemma}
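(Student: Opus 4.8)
The plan is to unwind the set-valued notation on both sides of \eqref{eqn:cond2} and then pass to the limit coordinate by coordinate, using the continuity of $f$. First I would fix $x \in \mathbb{R}^n$ and a sequence $\varepsilon_i \ge 0$ converging to $\varepsilon := \lim_{i\to\infty}\varepsilon_i$ (the right-hand side tacitly assumes this limit exists). Reading $\lim_{i\to\infty}\xi(x,\varepsilon_i)$ in the sense of the element-wise limit of trajectories introduced just before Definition~\ref{def:closed}, a generic member of this set is a trajectory $w_x = (x_0,x_1,x_2,\ldots)$ for which there is a converging sequence of trajectories $w_{x,i} = (x_{0,i},x_{1,i},x_{2,i},\ldots) \in \xi(x,\varepsilon_i)$ with $x_k = \lim_{i\to\infty}x_{k,i}$ for every $k \in \mathbb{N}$. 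So the lemma reduces to showing that any such $w_x$ lies in $\xi(x,\varepsilon)$, i.e., that $x_0 = x$ and $x_{k+1} \in f(x_k) + \ball_\varepsilon(0)$ for all $k$.

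The coordinate-by-coordinate limit is the core of the argument. The base condition $x_0 = x$ is immediate because $x_{0,i} = x$ for every $i$ by the definition \eqref{eqn:reach} of $\xi(x,\varepsilon_i)$. For the recursion, I would fix $k$ and rewrite the membership $w_{x,i} \in \xi(x,\varepsilon_i)$ as the inequality $\|x_{k+1,i} - f(x_{k,i})\|_\infty \le \varepsilon_i$. Letting $i \to \infty$: the term $x_{k+1,i}$ converges to $x_{k+1}$; the term $f(x_{k,i})$ converges to $f(x_k)$ because $x_{k,i} \to x_k$ and $f$ is continuous; hence, by continuity of subtraction and of $\|\cdot\|_\infty$, the left-hand side converges to $\|x_{k+1} - f(x_k)\|_\infty$, while the right-hand side converges to $\varepsilon$. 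Passing the inequality to the limit gives $\|x_{k+1} - f(x_k)\|_\infty \le \varepsilon$, which is exactly $x_{k+1} \in f(x_k) + \ball_\varepsilon(0)$. Since $k$ is arbitrary, $w_x \in \xi(x,\varepsilon)$, and the inclusion \eqref{eqn:cond2} follows.

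The only real subtlety, and the main (fairly mild) obstacle, is to pin down what $\lim_{i\to\infty}\xi(x,\varepsilon_i)$ means as a set and to make sure the limit interchanges are legitimate. Concretely: one must take an arbitrary element of the left-hand side together with a witnessing convergent sequence of trajectories, and the continuity hypothesis on $f$ is precisely what lets the one-step relation be carried across the limit at each coordinate (a discontinuous $f$ would in general destroy the inclusion). If one prefers not to assume that $\varepsilon_i$ itself converges, the same computation still works after first replacing the sequence by a subsequence along which $\varepsilon_i$ attains its $\limsup$, since the trajectory coordinates continue to converge along any subsequence; I do not expect this variant to require additional machinery.
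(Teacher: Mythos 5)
Your proof is correct and follows essentially the same route as the paper's own (one-line) argument, which simply invokes the continuity of $f$ together with the fact that $\lim_{i\to\infty}\ball_{\varepsilon_i}(0)=\ball_{\lim_{i\to\infty}\varepsilon_i}(0)$; you have merely filled in the coordinate-by-coordinate limit that the paper leaves implicit. No gaps.
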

\begin{proof}
The proof follows easily from the continuity of the map $f$ and the fact that $\lim_{i \rightarrow \infty }\ball_{\varepsilon_i}(0) = \ball_{\lim_{i \rightarrow \infty }\varepsilon_i}(0)$
\end{proof}

The condition in \eqref{eqn:cond2} can be seen as a sufficient condition for the outer semicontinuity~\cite{aubin2009viability} of the map $\xi$ with respect to its second variable.

\subsection{Closed specifications}
\label{sec:closed}

In this part, we provide some characterizations of the concept of closed specifications introduced in Definition \ref{def:closed}. We first relate the closedness of the atomic proposition $p$ to the closedness of the corresponding state space $L^{-1}(p) \subseteq X$.

\begin{proposition}
Consider the alphabet $\Sigma_a = 2^{AP}$ with the labeling function $L:X \rightarrow \Sigma_a$.
Then the atomic proposition $p\in AP$ is closed if the set $L^{-1}(p):=\cup\{L^{-1}(A) | A\in \Sigma_a \text{ such that } p\in A\}$ is a closed subset of $X$. Moreover, the specification $\neg p$, with $p$ being the atomic proposition, is closed if the set $L^{-1}(p)$ is an open subset of $X$.
\end{proposition}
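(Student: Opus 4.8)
The plan is to unpack the definition of a closed specification (Definition~\ref{def:closed}) for the two elementary cases, $\psi = p$ and $\psi = \neg p$, and reduce each to a purely topological statement about the preimage set $L^{-1}(p) \subseteq X$. The only subtlety is that the word-level satisfaction $\word_i \vDash p$ translates to a \emph{membership} condition on a single coordinate of the trajectory, namely $x_i \in L^{-1}(p)$, so the closedness of the specification over trajectories reduces to closedness of that coordinate set under coordinate-wise limits.

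First I would handle $\psi = p$. Let $w_{x,i}$, $i \in \mathbb{N}$, be a converging sequence of trajectories with limit $w_x = \lim_{i \to \infty} w_{x,i}$, and suppose $w_{x,i} \vDash p$ for all $i$. By the definition of satisfaction at step $0$, $w_{x,i} \vDash p$ means $p \in L(x_{0,i})$, i.e.\ $x_{0,i} \in L^{-1}(p)$ for every $i$. Since the limit of trajectories is taken coordinate-wise, $x_0 = \lim_{i \to \infty} x_{0,i}$, and this is a sequence of points of $L^{-1}(p)$ converging in $X$. If $L^{-1}(p)$ is closed, the limit $x_0$ lies in $L^{-1}(p)$, hence $p \in L(x_0)$, hence $w_x \vDash p$. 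This establishes that $p$ is a closed specification.

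Next I would handle $\psi = \neg p$. Again take a converging sequence $w_{x,i}$ with $w_{x,i} \vDash \neg p$ for all $i$, meaning $x_{0,i} \notin L^{-1}(p)$, i.e.\ $x_{0,i}$ lies in the complement $X \setminus L^{-1}(p)$. If $L^{-1}(p)$ is open, its complement is closed, so $x_0 = \lim_{i\to\infty} x_{0,i} \in X \setminus L^{-1}(p)$, i.e.\ $x_0 \notin L^{-1}(p)$, so $w_x \vDash \neg p$. Thus $\neg p$ is closed.

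The argument is essentially immediate once one observes that satisfaction of an atomic proposition depends on a single trajectory coordinate through the set membership $x_0 \in L^{-1}(p)$, and that coordinate-wise convergence of trajectories restricts to ordinary convergence in $X$ of that coordinate. So there is no real obstacle; the only point worth being careful about is the bookkeeping between the word-level semantics ($\word_0 \vDash p$ iff $p \in \word_0$) and the trajectory-level semantics via the labeling function, and the fact that $L^{-1}(p)$ as defined in the statement is exactly the set of states whose label contains $p$. One could equivalently note that for a \emph{constant} specification $\psi$ built from a single atomic proposition, the set of satisfying trajectories is $L^{-1}(p) \times X^{N-1}$ (respectively $(X\setminus L^{-1}(p)) \times X^{N-1}$), which is closed in the product topology precisely when $L^{-1}(p)$ is closed (respectively when $L^{-1}(p)$ is open), and closedness of a specification in the sense of Definition~\ref{def:closed} is exactly sequential closedness of this set of satisfying trajectories.
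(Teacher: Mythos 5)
Your proof is correct and follows essentially the same route as the paper's: reduce satisfaction of $p$ at step $0$ to the membership $x_{0,i}\in L^{-1}(p)$, use coordinate-wise convergence together with closedness of $L^{-1}(p)$ to conclude $x_0\in L^{-1}(p)$, and handle $\neg p$ by noting the complement of an open set is closed. The only difference is that you spell out the $\neg p$ case, which the paper dismisses as immediate.
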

\begin{proof}
    Consider a converging sequence of trajectories $w_{x,i}=(x_{0,i},x_{1,i},\ldots,x_{N-1,i})$, $i \in \mathbb{N}$, such that for all $i \in \mathbb{N}$, $w_{x,i} \vDash \psi$. Hence, one gets that for all $i \in \mathbb{N}$, $x_{0,i} \in L^{-1}(\psi)$. It follows from the closedness of the set $L^{-1}(\psi)$, that $x_0=\lim_{i\rightarrow \infty}x_{0,i} \in L^{-1}(\psi)$ which in turn implies that $w_x=\lim_{i\rightarrow \infty} w_{x,i} \vDash \psi$ and that $\psi$ is a closed specification. The proof of the second result follows immediately.
\end{proof}

Now, we provide a characterization of the fragement of LTL$_f$ specifications that are closed.

\begin{proposition}
Let $\psi_1, \psi_2$ be closed specifications and $M \in \mathbb{N}$, with $M \leq N$, where $N$ represents the length of the trajectories of the LTL$_F$ specifications $\psi_1, \psi_2$. Let $\psi$ be constructed with the grammar
\begin{align*}
    \psi :=   &\psi_1 \,|\,\psi_1 \wedge \psi_2\,|\,\psi_1 \vee \psi_2 \,|\,\nex \psi_1 \,|\,\nex^M \psi_1 \,|\,\square \psi_1  \,|\,  \square^M \psi_1  \,|\,\lozenge \psi_1 \,|\,\lozenge^M \psi_1.
    \end{align*}
Then, the specification $\psi$ is a closed specification.
\end{proposition}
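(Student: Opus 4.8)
The statement is a closure property for the fragment of LTL$_F$ built from closed subformulas using $\wedge$, $\vee$, and the (bounded and unbounded) temporal operators $\nex$, $\nex^M$, $\square$, $\square^M$, $\lozenge$, $\lozenge^M$. The plan is a structural induction on the construction of $\psi$ by the given grammar. The base case $\psi = \psi_1$ is immediate since $\psi_1$ is assumed closed. For each production we take a converging sequence of trajectories $w_{x,i} = (x_{0,i},x_{1,i},\ldots,x_{N-1,i})$, $i \in \mathbb{N}$, with element-wise limit $w_x = \lim_{i\to\infty} w_{x,i}$, assume $w_{x,i} \vDash \psi$ for all $i$, and show $w_x \vDash \psi$.

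\textbf{Key steps.} For $\psi = \psi_1 \wedge \psi_2$: from $w_{x,i} \vDash \psi_1 \wedge \psi_2$ we get $w_{x,i}\vDash\psi_1$ and $w_{x,i}\vDash\psi_2$ for every $i$, so by closedness of $\psi_1$ and $\psi_2$ we obtain $w_x\vDash\psi_1$ and $w_x\vDash\psi_2$, hence $w_x\vDash\psi$. The conjunction case is the clean one. For $\psi = \psi_1 \vee \psi_2$ the subtlety is that the index at which $\psi_1$ (rather than $\psi_2$) is satisfied may vary with $i$; here I would invoke the pigeonhole principle: since each $i$ satisfies at least one of the two finitely-many alternatives, some fixed alternative, say $\psi_1$, holds for infinitely many $i$. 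Passing to that subsequence $w_{x,i_k}$ — which still converges element-wise to $w_x$ since it is a subsequence — and using closedness of $\psi_1$ gives $w_x\vDash\psi_1$, hence $w_x\vDash\psi$. For the temporal operators I would use the observation that an element-wise shift or truncation of a converging sequence of trajectories is again a converging sequence with the correspondingly shifted/truncated limit: for $\psi = \nex^M\psi_1$, the tails $(x_{M,i},x_{M+1,i},\ldots)$ converge to $(x_M, x_{M+1},\ldots)$, each tail satisfies $\psi_1$, so closedness of $\psi_1$ gives the limit tail satisfies $\psi_1$, i.e. $w_x\vDash\nex^M\psi_1$; the case $\nex$ is $M=1$, and $\square^M\psi_1 = \bigwedge_{j=0}^{M}\nex^j\psi_1$ reduces to the already-handled conjunction and shift cases (likewise $\square\psi_1 = \bigwedge_{j=0}^{N-1}\nex^j\psi_1$ over the fixed finite horizon $N$). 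For $\lozenge$ and $\lozenge^M$ the same pigeonhole argument as for $\vee$ applies, since $\lozenge^M\psi_1 = \bigvee_{j=0}^{M-1}\nex^j\psi_1$ is a finite disjunction over a fixed, $i$-independent index range (this is where finiteness of the horizon $N$ is essential).

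\textbf{Main obstacle.} The only genuine difficulty is the disjunctive/eventually case, where the satisfying index depends on $i$; everything else is bookkeeping about element-wise limits commuting with shifts and truncations. The resolution is the pigeonhole argument combined with the fact that every production is a \emph{finite} Boolean combination of shifted copies of closed subformulas over a \emph{fixed finite} horizon $N$, so there are only finitely many candidate "witness indices" and one of them must recur infinitely often. I would also note explicitly that a subsequence of a converging sequence of trajectories converges (element-wise) to the same limit, which is what makes the pigeonhole reduction legitimate. After reducing $\square^M,\square,\lozenge^M$ to conjunctions and disjunctions of $\nex^j\psi_1$ via the inductive semantics, only the $\nex^j$, $\wedge$, and $\vee$ cases need to be verified directly, and the proof closes by induction.
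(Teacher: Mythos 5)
Your proof is correct and follows essentially the same route as the paper's: a structural induction over the grammar, with direct limit arguments for $\wedge$ and $\nex^M$ and a subsequence argument for the disjunctive cases ($\vee$, $\lozenge^M$), which the paper likewise handles by splitting the sequence into subsequences according to which disjunct is witnessed. Your explicit pigeonhole phrasing (at least one disjunct recurs infinitely often, and any subsequence converges to the same limit) is in fact a slightly more careful rendering of the paper's argument, which tacitly assumes both subsequences are infinite; the reduction of $\square^M$ and $\lozenge^M$ to finite Boolean combinations of $\nex^j\psi_1$ is only an organizational difference.
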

\begin{proof}
We provide proof for each property separately.

\textbf{\underline{Proof for $\psi_1 \wedge \psi_2$}:} Consider a converging sequence of trajectories $w_{x,i}$, $i \in \mathbb{N}$, such that for all $i \in \mathbb{N}$, $w_{x,i} \vDash \psi_1$ and $w_{x,i} \vDash \psi_2$, then from closedness of the specifications $\psi_1$ and $\psi_2$, it follows that $w_x=\lim_{i\rightarrow \infty} w_{x,i} \vDash \psi_1$ and $w_x \vDash \psi_2$, which implies that $w_x\vDash \psi_1 \wedge \psi_2$.

\textbf{\underline{Proof for $\psi_1 \vee \psi_2$}:} Consider a converging sequence of trajectories $w_{x,i}$, $i \in \mathbb{N}$, such that for all $i \in \mathbb{N}$, $w_{x,i} \vDash \psi_1$ or $w_{x,i} \vDash \psi_2$. Thus, we can use this fact to construct two subsequences: the subsequence $w_{x,j_k}$ where $w_{x,j_k}$ is a term of $w_{x,i}$ satisfying $w_{x,j_k} \vDash \psi_1$, and the subsequence $w_{x,h_k}$ where $w_{x,h_k}$ is a term of $w_{x,i}$ satisfying $w_{x,h_k} \vDash \psi_2$. First, since all the terms of the subsequence $w_{x,j_k}$ satisfy $w_{x,j_k} \vDash \psi_1$, one gets from the closedeness of $\psi_1$ that $w^1=\lim_{k\rightarrow \infty}w_{x,j_k} \vDash \psi_1$. Similarly, one gets that $w^2=\lim_{k\rightarrow \infty}w_{x,h_k} \vDash \psi_2$. Now, since the sequence $w_{x,i}$, $i \in \mathbb{N}$, is converging and has one limit, the limit of $w_{x,i}$ is either $w^1$ or $w^2$. Hence, one gets that $w_x=\lim_{i\rightarrow \infty} \vDash \psi_1 \vee \psi_2$.

   \textbf{\underline{Proof for $\nex^M \psi_1$}:} Consider a converging sequence of trajectories $w_{x,i}$, $i \in \mathbb{N}$, such that for all $i \in \mathbb{N}$, $w_{x,i} \vDash \nex^M \psi_1$, with $w_{x,i}=(x_{0,i},x_{1,i},\ldots,x_{N-1,i})$. Hence, one gets that for all $i \in \mathbb{N}$, $x_{M,i} \vDash \psi_1$, using the fact that $w_x=\lim_{i\rightarrow \infty} w_{x,i}=x_{0},x_{1},x_{2},\ldots,x_N$, and that $x_M=\lim_{i\rightarrow \infty} x_{M,i}$, once gets from the closedness of the specification $\psi_1$ that $x_M=\lim_{i\rightarrow \infty} x_{M,i} \vDash \psi_1$, which in turn implies that $w_{x} \vDash \nex^M \psi_1$. The proof for the case of $\nex \psi_1$ can be derived similarly.

\textbf{\underline{Proof for $\square^M \psi_1$}:} Consider a converging sequence of trajectories $w_{x,i}$, $i \in \mathbb{N}$, such that for all $i \in \mathbb{N}$, $w_{x,i} \vDash \square^M \psi_1$, with $w_{x,i}=(x_{0,i},x_{1,i},\ldots,x_{N-1,i})$. Hence, one gets that for all $i \in \mathbb{N}$ and for all $j \in \{0,1,\ldots M-1\}$, $x_{j,i} \vDash \psi_1$. Using the fact that $w_x=\lim_{i\rightarrow \infty} w_{x,i}=x_{0},x_{1},x_{2},\ldots,x_N$, and that $x_j=\lim_{i\rightarrow \infty} x_{j,i}$, $j \in \{0,1,\ldots,M-1\}$, once gets from the closedness of the specification $\psi_1$ that $x_j=\lim_{i\rightarrow \infty} x_{j,i} \vDash \psi_1$, $j \in \{0,1,\ldots,M-1\}$, which in turn implies that $w_{x} \vDash \square^M \psi_1$. The proof for the case of $\square \psi_1$ can be derived similarly.

\textbf{\underline{Proof for $\lozenge^M \psi_1$}:} Consider a converging sequence of trajectories $w_{x,i}$, $i \in \mathbb{N}$, such that for all $i \in \mathbb{N}$, $w_{x,i} \vDash \lozenge^M \psi_1$. Hence, for all $i \in \mathbb{N}$, we have the existence of $j \in \{0,1,\ldots,M-1\}$ such that $w_{x,i} \vDash \nex^j \psi_1$. Thus, we can use this fact to construct $M$ subsequence, where the kth subsequence $w_{x,k_h}$ is made of the terms of $w_{x,i}$ satisfying $w_{x,k_h} \vDash \nex^k\psi_1$, $k \in \{0,1,\ldots,M-1\}$. First, since all the terms of the subsequence $w_{x,k_h}$ satisfy $w_{x,k_h} \vDash \nex^k\psi_1$, one gets from the closedenss of the specification $\psi_1$ and the fact that the closedeness property is preserved under $\nex^k$ that $w^k=\lim_{h\rightarrow \infty}w_{x,k_h} \vDash \nex^k\psi_1$. Now, since the sequence $w_{x,i}$, $i \in \mathbb{N}$, is converging and has one limit, we have the existence of $k \in \{0,1,\ldots,M-1\}$ such that $w_x=\lim_{i\rightarrow \infty}=w^k$. Using the fact that $w^k \vDash \nex^k\psi_1$ for all $k \in \{0,1,\ldots,M-1\}$ one gets that $w_x\vDash \lozenge^M \psi_1$.
\end{proof}

\subsection{Convex specifications}
\label{sec:convex}

In this part, we provide some characterizations of the concept of convex specifications introduced in Definition \ref{def:convex}. We first relate the convexity of the atomic proposition $p$ to the convexity of the corresponding state space $L^{-1}(p) \subseteq X$.

\begin{proposition}
Consider the alphabet $\Sigma_a = 2^{AP}$ with the labeling function $L:X \rightarrow \Sigma_a$.
Then the atomic proposition $p\in AP$ is convex if the set $L^{-1}(p):=\cup\{L^{-1}(A) | A\in \Sigma_a \text{ such that } p\in A\}$ is a convex set.
\end{proposition}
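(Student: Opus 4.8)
The statement to prove is: if $L^{-1}(p) \subseteq X$ is a convex set, then the atomic proposition $p$ is a convex specification in the sense of Definition~\ref{def:convex}. I would unfold the definitions and reduce everything to the convexity of $L^{-1}(p)$ evaluated at the first coordinate of a trajectory. The first step is to recall that, by the semantics given before Definition~\ref{def:LTL}, a trajectory $w_x = (x_0, x_1, \ldots, x_{N-1})$ satisfies the atomic proposition $p$ exactly when $\boldsymbol{\omega}_0 \vDash p$ where $\boldsymbol{\omega} = \textsf{L}(w_x)$, i.e.\ when $p \in \textsf{L}(x_0)$, which is precisely the condition $x_0 \in \textsf{L}^{-1}(p)$. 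So satisfaction of $p$ by a trajectory depends only on its initial state belonging to $\textsf{L}^{-1}(p)$.

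\textbf{Main argument.} Take two trajectories $w_{x,1} = (x_{0,1}, x_{1,1}, \ldots)$ and $w_{x,2} = (x_{0,2}, x_{1,2}, \ldots)$ with $w_{x,1} \vDash p$ and $w_{x,2} \vDash p$; by the observation above, $x_{0,1} \in \textsf{L}^{-1}(p)$ and $x_{0,2} \in \textsf{L}^{-1}(p)$. Fix $\lambda \in [0,1]$ and let $w_x = \lambda w_{x,1} + (1-\lambda) w_{x,2}$, which by the element-wise definition of the convex combination of trajectories has initial state $x_0 = \lambda x_{0,1} + (1-\lambda) x_{0,2}$. Since $\textsf{L}^{-1}(p)$ is convex, $x_0 \in \textsf{L}^{-1}(p)$, hence $p \in \textsf{L}(x_0)$, hence $\boldsymbol{\omega}_0 \vDash p$ for $\boldsymbol{\omega} = \textsf{L}(w_x)$, i.e.\ $w_x \vDash p$. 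This is exactly the condition required by Definition~\ref{def:convex}, so $p$ is a convex specification.

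\textbf{Where the difficulty lies.} There is essentially no analytical obstacle here: the proof is a direct unwinding of definitions, and the only substantive input is the convexity of $\textsf{L}^{-1}(p)$, used once. The one point that deserves a line of care is making explicit that the truth of an atomic proposition at step $0$ depends solely on the first component of the trajectory (so that the intermediate states $x_{k,1}, x_{k,2}$ for $k \geq 1$ play no role), which is immediate from the semantics but worth stating so the reduction to convexity of $\textsf{L}^{-1}(p)$ is transparent. I would also note in passing that, unlike the closed-specification case where $\neg p$ required $\textsf{L}^{-1}(p)$ to be open, here there is no natural dual statement, since the complement of a convex set need not be convex.
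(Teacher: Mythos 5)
Your proof is correct and follows essentially the same route as the paper's: both arguments observe that satisfaction of an atomic proposition by a trajectory depends only on whether its initial state lies in $L^{-1}(p)$, and then apply convexity of $L^{-1}(p)$ to the convex combination of the two initial states. If anything, your write-up is cleaner, since the paper's own proof carries over some stray $\nex^M$ notation from the neighbouring proposition.
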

\begin{proof}
    Consider the trajectories $w_{x,i}$, $i \in \{1,2\}$, such that $w_{x,i} \vDash \nex^M \psi $ and consider $\lambda \in [0,1]$. Define $w_x=\lambda w_{x,1} + (1-\lambda) w_{x,2}$. For $w_{x,i}=(x_{0,i},x_{1,i},\ldots,x_{N-1,i})$, $i=1,2$, one gets that $x_{0,i} \vDash \psi$. It follows from the convexity of the set $L^{-1}(\psi)$, that $x_0=\lambda x_{0,1} + (1-\lambda) x_{0,2}$ which in turn implies that $w_x= \lambda w_{x,1} + (1-\lambda) w_{x,2}\vDash \psi$ and that $\psi$ is a convex specification. 
\end{proof}

Now, we provide a characterization of the fragment of LTL$_f$ specifications that are convex.

\begin{proposition}
Let $\psi_1, \psi_2$ be convex specification and $M \in \mathbb{N}$, with $M \leq N$, where $N$ represents the length of the trajectories of the LTL$_F$ specifications $\psi_1, \psi_2$. Let $\psi$ be constructed with the grammar
\begin{eqnarray*}
    \psi :=  \psi_1 \,|\,\psi_1 \wedge \psi_2\,|\,\nex \psi_1 \,|\,\nex^M \psi_1 \,|\,\square \psi_1 \,|\,\square^M \psi_1. 
    \end{eqnarray*}
Then, the specification $\psi$ is a convex specification.
\end{proposition}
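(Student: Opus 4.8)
The plan is to prove convexity of $\psi$ by structural induction on the grammar, handling each production in turn. For the base case $\psi = \psi_1$ there is nothing to prove. The key preliminary observation I would make is an analogue of the ``preserved under $\nex^k$'' remark from the closed-specification proof: if $\phi$ is a convex specification, then so is $\nex^M\phi$ for any $M \le N$. This follows because $w_x \vDash \nex^M\phi$ depends only on the suffix of $w_x$ starting at index $M$, the map $w_x \mapsto (x_M, x_{M+1}, \ldots)$ is linear (it is just a coordinate projection), and a linear map sends convex combinations to convex combinations; so if $w_{x,1}, w_{x,2} \vDash \nex^M\phi$ and $w_x = \lambda w_{x,1} + (1-\lambda)w_{x,2}$, then the index-$M$ suffix of $w_x$ is the $\lambda$-convex combination of the suffixes of $w_{x,1}, w_{x,2}$, each of which satisfies $\phi$, hence by convexity of $\phi$ the suffix satisfies $\phi$, i.e. $w_x \vDash \nex^M\phi$. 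This single lemma covers both $\nex\psi_1$ (take $M=1$) and $\nex^M\psi_1$.

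For the conjunction case $\psi = \psi_1 \wedge \psi_2$: given trajectories $w_{x,1}, w_{x,2}$ with $w_{x,i} \vDash \psi_1 \wedge \psi_2$, each satisfies $\psi_1$ and each satisfies $\psi_2$; by convexity of $\psi_1$ the combination $w_x = \lambda w_{x,1} + (1-\lambda)w_{x,2}$ satisfies $\psi_1$, and similarly it satisfies $\psi_2$, hence $w_x \vDash \psi_1 \wedge \psi_2$. For $\square^M\psi_1$ I would use the identity $\square^M\psi_1 = \bigwedge_{j=0}^{M}\nex^j\psi_1$ (from the semantics, $\word_0 \vDash \square^M\psi_1$ iff $\word_m \vDash \psi_1$ for all $0 \le m \le M$), then combine the $\nex^j$ lemma above with the conjunction case: each $\nex^j\psi_1$ is convex, and a finite conjunction of convex specifications is convex (iterating the conjunction argument). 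The case $\square\psi_1$ is the special case $M = N-1$, or can be argued directly in the same way.

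The main obstacle — or rather, the point requiring the most care — is being precise about what ``linear'' means for trajectories and why the projection-to-suffix and convex-combination operations interact correctly; in particular one must note that $\lambda w_{x,1} + (1-\lambda)w_{x,2}$ is defined componentwise, so $(\lambda w_{x,1} + (1-\lambda)w_{x,2})_k = \lambda x_{k,1} + (1-\lambda)x_{k,2}$, and thus shifting the index commutes with taking the convex combination. A secondary subtlety worth flagging is why disjunction $\psi_1 \vee \psi_2$ is deliberately absent from the grammar: the convex-combination of a trajectory satisfying $\psi_1$ with one satisfying $\psi_2$ need not satisfy either disjunct (the same phenomenon illustrated by the earlier example for property (iv) of Proposition~\ref{prop:struct}), and likewise $\lozenge$ is absent because ``eventually reaching a convex set'' is not convex — the time at which the target is hit can differ between the two trajectories. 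I would not belabour these non-examples in the proof itself but they explain the shape of the grammar. Once the $\nex^M$ lemma and the finite-conjunction closure are in hand, the induction is a routine case check and I would write it out compactly, one paragraph per production, exactly mirroring the structure of the closed-specification proof.
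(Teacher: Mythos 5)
Your proposal is correct and follows essentially the same route as the paper: componentwise convex combinations commute with index shifts, convexity of $\psi_1$ is applied at each relevant index, and conjunction preserves convexity. The only cosmetic difference is that you derive $\square^M\psi_1$ by rewriting it as $\bigwedge_{j=0}^{M}\nex^j\psi_1$ and reusing the $\nex^j$ lemma together with closure under finite conjunction, whereas the paper argues that case directly index by index; the underlying argument is identical.
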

\begin{proof}
\textbf{\underline{Proof for $\psi_1 \wedge \psi_2$}:} Consider the trajectories $w_{x,i}$, $i \in \{1,2\}$, such that $w_{x,i} \vDash \psi_1 \wedge \psi_2$ and consider $\lambda \in [0,1]$. Define $w_x=\lambda w_{x,1} + (1-\lambda) w_{x,2}$. Since both $\psi_1$ and $\psi_2$ are convex, we have that $w_{x} \vDash \psi_1$ and $w_{x} \vDash \psi_2$ and the result holds.

\textbf{\underline{Proof for $\nex^M \psi_1$}:} Consider the trajectories $w_{x,i}$, $i \in \{1,2\}$, such that $w_{x,i} \vDash \nex^M \psi_1 $ and consider $\lambda \in [0,1]$. Define $w_x=\lambda w_{x,1} + (1-\lambda) w_{x,2}$. For $w_{x,i}=(x_{0,i},x_{1,i},\ldots,x_{N-1,i})$, $i=1,2$, one gets that $x_{M,i} \vDash \psi_1$. Since $\psi_1$ is a convex specification, it follows that $w_{x,N}=\lambda x_{M,1} + (1-\lambda) x_{M,2}$, with $w_x=x_1,x_2,\ldots,x_N$, which in turn implies that $w_{x} \vDash \nex^M \psi_1$. The proof for the case of $\nex \psi_1$ can be derived similarly.

\textbf{\underline{Proof for $\square^M \psi_1$}:} Consider the trajectories $w_{x,i}$, $i \in \{1,2\}$, with $w_{x,i}=(x_{0,i},x_{1,i},\ldots,x_{N-1,i})$ and such that $w_{x,i} \vDash \square^M \psi_1 $ and consider $\lambda \in [0,1]$. Define $w_x=\lambda w_{x,1} + (1-\lambda) w_{x,2}$. Hence, one gets that for $i =1,2$ and for all $j \in \{0,1,\ldots M-1\}$, $x_{j,i} \vDash \psi_1$. Using the fact that $\psi_1$ is a convex specification, it follows that the trajectory $w_x=(x_{0},x_{1},\ldots,x_{N-1})$ satisfies for $j \in \{0,1,\ldots M-1\}$ that $x_j=\lambda x_{j,1}+(1-\lambda) x_{j,2} \vDash \psi_1$, which in turn implies that $w_{x} \vDash \square^M \psi_1$. The proof for the case of $\square \psi_1$ can be derived similarly.
\end{proof}

\bibliographystyle{alpha}
\bibliography{biblio}

\end{document}